\global\long\def\M{\mathcal{M}}%
\global\long\def\I{\mathcal{I}}%
\global\long\def\tO{\tilde{O}}%
\global\long\def\eps{\varepsilon}%
\global\long\def\epsilon{\varepsilon}%
\global\long\def\TI{\mathcal{T}_\textrm{ind}}%
\global\long\def\TR{\mathcal{T}_\textrm{rank}}%
\title{Deterministic $(2/3 - \varepsilon)$-Approximation of Matroid Intersection Using Nearly-Linear Independence-Oracle Queries
% \thanks{The author thanks Yusuke Kobayashi for his generous support and helpful comments on the manuscript.
% This work was partially supported by the joint project of Kyoto University and Toyota Motor Corporation, titled ``Advanced Mathematical Science for Mobility Society'' and by JSPS KAKENHI Grant Number JP24KJ1494.}
}
\titlerunning{Deterministic $(2/3 - \varepsilon)$-Approximation of Matroid Intersection Using Nearly-Linear Queries}
\author{Tatsuya Terao}{Research Institute for Mathematical Sciences, Kyoto University}{ttatsuya@kurims.kyoto-u.ac.jp}{https://orcid.org/0000-0002-3530-2194}{}
\authorrunning{T.\ Terao}
\keywords{Matroid intersection, approximation algorithm, streaming algorithm}
\date{}
\begin{document}

\maketitle

\begin{abstract}

In the matroid intersection problem, we are given two matroids $\mathcal{M}_1 = (V, \mathcal{I}_1)$ and $\mathcal{M}_2 = (V, \mathcal{I}_2)$ defined on the same ground set $V$ of $n$ elements, and the objective is to find a common independent set $S \in \mathcal{I}_1 \cap \mathcal{I}_2$ of largest possible cardinality, denoted by $r$. In this paper, we consider a deterministic matroid intersection algorithm with only a nearly linear number of independence oracle queries. Our contribution is to present a deterministic $O(\frac{n}{\varepsilon} + r \log r)$-independence-query $(2/3-\varepsilon)$-approximation algorithm for any $\varepsilon > 0$. Our idea is very simple: we apply a recent $\tilde{O}(n \sqrt{r}/\varepsilon)$-independence-query $(1 - \varepsilon)$-approximation algorithm of Blikstad [ICALP 2021], but terminate it before completion. Moreover, we also present a semi-streaming algorithm for $(2/3 -\varepsilon)$-approximation of matroid intersection in $O(1/\varepsilon)$ passes.

\end{abstract}

\section{Introduction} \label{sec:introduction}

The \emph{matroid intersection problem} is one of the most fundamental problems in combinatorial optimization.
In this problem, we are given two matroids $\M_1 = (V, \mathcal{I}_1)$ and $\M_2 = (V, \mathcal{I}_2)$ defined on the same ground set $V$ of $n$ elements, and the objective is to find a common independent set $S \in \mathcal{I}_1 \cap \mathcal{I}_2$ of largest possible cardinality, denoted by $r$.
This problem generalizes many important combinatorial optimization problems such as bipartite matching, packing spanning trees, and arborescences in directed graphs. 
Furthermore, it has also several applications outside of traditional combinatorial optimization such as electrical engineering \cite{murota2010matrices, recski2013matroid}.

To design an algorithm for arbitrary matroids, it is common to consider an oracle model: an algorithm accesses a matroid through an oracle.
The most standard and well-studied oracle is an {\em independence oracle}, which takes as input a subset $S \subseteq V$ and outputs whether $S$ is independent or not.
Many studies consider the following research question: how few queries can we solve the matroid intersection problem with?

Starting the work of Edmonds \cite{edmonds1970submodular, edmonds1979matroid}, many algorithms with polynomial query complexity for the matroid intersection problem have been
studied \cite{aigner1971matching, lawler1975matroid, cunningham1986improved, shigeno1995dual, lee2015faster, chekuri2016fast, huang2016exact, nguyen2019note, chakrabarty2019faster, blikstad2021breaking_STOC, blikstad2021breaking, u2022subquadratic, blikstad2023fast, berczi2023matroid,quanrud:LIPIcs.ICALP.2024.118,blikstad2024efficient}.
Edmonds \cite{edmonds1970submodular} developed the first polynomial-query algorithm by reduction to the matroid partitioning problem. This algorithm requires $O(n^4)$ independence oracle queries.
More direct algorithms were given by Aigner--Dowling \cite{aigner1971matching} and Lawler \cite{lawler1975matroid}.
The algorithm of Lawler requires $O(nr^2)$ independence oracle queries.
In 1986, Cunningham \cite{cunningham1986improved} presented an $O(nr^{3/2})$-independence-query algorithm by using a blocking flow approach, which is akin to the bipartite matching algorithm of Hopcroft--Karp \cite{hopcroft1973n}.
Chekuri--Quanrud \cite{chekuri2016fast} pointed out that, for any $\eps > 0$, an $O(nr/\eps)$-independence-query $(1 - \eps)$-approximation algorithm can be obtained by terminating Cunningham's algorithm early.
This idea comes from the well-known fact that, for the bipartite matching problem, a linear-time $(1 - \eps)$-approximation algorithm can be obtained by terminating Hopcroft--Karp's algorithm early.
Recently, Nguy$\tilde{{\hat{\text{e}}}}$n~\cite{nguyen2019note} and Chakrabarty--Lee--Sidford--Singla--Wong \cite{chakrabarty2019faster} independently presented a new binary search technique that can efficiently find edges in the exchange graph and developed a combinatorial $\tilde{O}(n r)$-independence-query exact algorithm.\footnote{The $\tO$ notation omits factors polynomial in $\log n$.}
Chakrabarty et al.~also presented a new augmenting sets technique and developed an $\tO(n^{1.5}/\epsilon^{1.5})$-independence-query $(1 - \epsilon)$-approximation algorithm.
The techniques developed by Nguy$\tilde{{\hat{\text{e}}}}$n and Chakrabarty et al.~have been used in recent several studies on fast algorithms for other matroid problems \cite{blikstad:LIPIcs.ICALP.2022.25, terao:LIPIcs.ICALP.2023.104, viswanathan2023general, quanrud2024faster, kobayashi2024subqudratic, buchbinder2024deterministic}.
Blikstad--van den Brand-Mukhopadhyay--Nanongkai~\cite{blikstad2021breaking_STOC} first broke the $\tilde{O}(n^2)$-independence-query bound for exact matroid intersection algorithms.
Blikstad \cite{blikstad2021breaking} improved the independence query complexity of a $(1 - \epsilon)$-approximation algorithm to $\tilde{O}(n \sqrt{r} / \eps)$.
Blikstad's improvement on the $(1 - \epsilon)$-approximation algorithm resulted in a randomized $\tilde{O}(nr^{3/4})$-independence-query exact algorithm and a deterministic $\tilde{O}(nr^{5/6})$-independence-query exact algorithm.
Recently, Quanrud \cite{quanrud:LIPIcs.ICALP.2024.118} presented a randomized $\tilde{O}_{\eps}(n + r^{3/2})$-independence-query $(1 - \epsilon)$-approximation algorithm\footnote{The $\tilde{O}_{\eps}$ notation omits factors polynomial in $\eps$ and $\log n$.}.
Remarkably, Blikstad-Tu \cite{blikstad2024efficient} very recently presented a randomized $\tilde{O}_{\eps}(n)$-independence-query $(1 - \epsilon)$-approximation algorithm.\footnote{
The results in this manuscript were obtained independently of the remarkable result by Blikstad--Tu \cite{blikstad2024efficient}.
}

% Nowadays, since we need to deal with large amounts of data, {\em fast approximation algorithms} are actively studied.
% There is a need for algorithms that run fast, especially in nearly linear time, while achieving a good approximation ratio.
% Many studies focus on this area.
% For instance, there are studies on fast approximation algorithms for sparse cut \cite{khandekar2009graph, madry2010fast}, $k$-cut \cite{quanrud:LIPIcs.APPROX-RANDOM.2019.23}, travelling salesman problem \cite{chekuri2017approximating, chekuri2018fast}, survivable network design problem \cite{gabow1998efficient, chalermsook_et_al:LIPIcs.ICALP.2022.37}, and
% weighted matching problem \cite{preis1999linear, drake2003simple, vinkemeier2005linear, duan2014linear, zheng2024multiplicative}.

% In this paper, we consider a matroid intersection algorithm with only a nearly linear number of independence oracle queries.

Both the recent $(1 - \epsilon)$-approximation algorithms by Quanrud \cite{quanrud:LIPIcs.ICALP.2024.118} and Blikstad--Tu \cite{blikstad2024efficient} are randomized.
% We note that Blikstad--Tu \cite{blikstad2024efficient} also presented a deterministic $O_{\eps}(n)$-independence-query $(1 - \epsilon)$-approximation algorithm only when $r = \Theta(n)$.
Therefore, it remains an open question whether a deterministic $\tilde{O}_{\eps}(n)$ independence-query $(1 - \epsilon)$-approximation algorithm can be achieved for almost the entire range of $r$.\footnote{Blikstad--Tu \cite{blikstad2024efficient} presented a deterministic $(1 - \epsilon)$-approximation algorithm using a strictly linear number of independence oracle queries only when $r = \Theta(n)$ (and $\eps$ is constant).}
% It remains a natural open question whether a deterministic $\tilde{O}_{\eps}(n)$ independence-query $(1 - \eps)$-approximation can be achieved.
Then, we consider a deterministic matroid intersection algorithm with only a nearly linear number of independence oracle queries.
It is well-known that any maximal common independent set is at least half the size of a maximum common independent set (see e.g., \cite[Proposition 13.26]{korte2011combinatorial}).
Thus, the natural greedy algorithm---pick an element whenever possible---can be regarded as a deterministic $1/2$-approximation algorithm using a linear number of independence oracle queries.
In fact, even beating the trivial $1/2$-approximation ratio for deterministic nearly-linear-independence-query algorithms remains an open problem for almost the entire range of $r$.\footnotemark[4]

% Guruganesh--Singla \cite{guruganesh2017online} presented a randomized algorithm that achieves $(1/2 + \delta)$-approximation for some small constant $\delta > 0$ with only a strictly linear number of independence oracle queries.
% To our knowledge, their approximation ratio $1/2 + \delta$ is the best in matroid intersection algorithms that use nearly linear independence oracle queries for the full range of $r$.

% These developments propose the following natural question:

% \begin{wrapper}
%      \textbf{Question.}
%         Is there a deterministic matroid intersection algorithm that achieves an approximation factor greater than half with only a nearly linear number of independence oracle queries?
    
%     % \textbf{Question.}
%     %     For the matroid intersection problem, using only a nearly linear number of independence oracle queries, can we achieve a better approximation ratio than that of Guruganesh--Singla \cite{guruganesh2017online}?
% \end{wrapper}

% We note that a very recent algorithm of Quanrud \cite{quanrud:LIPIcs.ICALP.2024.118} achieves $(1 - \eps)$-approximation using $O(n \log(n)/\eps + r^{3/2} \log^{O(1)}(r)/\eps^3)$ independence oracle queries, which is nearly linear only when $r = O(n^{2/3})$.
% We also note that his algorithm is randomized.

% Blikstad-Tu \cite{blikstad2024efficient}

% Our main result is to answer both \textbf{Questions 1} and \textbf{2} in the affirmative.

In this work, we present a deterministic $(2/3 - \eps)$-approximation algorithm using a nearly linear number of independence oracle queries.

\begin{restatable}{theorem}{indalgo} \label{thm:main_matroid_intersection_ind}
    Given two matroids $\M_1 = (V, \I_1)$ and $\M_2 = (V, \I_2)$ on the same ground set $V$, for any $\eps > 0$,
    there is a deterministic algorithm that finds a common independent set $S \in \I_1 \cap \I_2$ with $|S| \geq (2/3 - \varepsilon) r$, using $O(\frac{n}{\varepsilon} + r \log r)$ independence oracle queries.
\end{restatable}

Our algorithm uses only a strictly linear number of independence oracle queries when $r = O(n / \log n)$ (and $\eps$ is constant).
Prior to our work, Guruganesh--Singla \cite{guruganesh2017online} presented a randomized $O(n)$-independence-query algorithm that achieves $(1/2 + \delta)$-approximation for some small constant $\delta > 0$, which was the best approximation ratio for (possibly randomized) algorithms using only a strictly linear number of independence oracle queries.\footnotemark[4]

We note that the time complexity of our algorithm is dominated by the independence oracle queries.
That is, our algorithm in Theorem \ref{thm:main_matroid_intersection_ind} has time complexity $O\left((\frac{n}{\varepsilon} + r \log r)\cdot \TI\right)$, where $\TI$ denotes the maximum running time of a single independence oracle query.

Our idea is very simple: we apply a recent $\tilde{O}(n \sqrt{r}/\varepsilon)$-independence-query $(1 - \varepsilon)$-approximation algorithm of Blikstad \cite{blikstad2021breaking}, but terminate it before completion.
We observe that a $(2/3 - \eps)$-approximate solution can be obtained by terminating Blikstad's algorithm early.
As such, our algorithm does not introduce technical novelty, but we believe that it enhances the understanding of algorithms for the matroid intersection problem.

We also implement our algorithm using a {\em rank oracle}, which takes as input a subset $S \subseteq V$ and outputs the size of the maximum cardinality independent subset of $S$.
Several recent studies have considered fast matroid intersection algorithms in the rank oracle model \cite{lee2015faster, chakrabarty2019faster, u2022subquadratic}.
We note that the rank oracle is more powerful than the independence oracle, since a single query to the rank oracle can determine whether a given set is independent or not.

Chakrabarty--Lee--Sidford--Singla--Wong \cite{chakrabarty2019faster} presented an $\tilde{O}(n \sqrt{r})$-rank-query exact algorithm and an $O(\frac{n}{\varepsilon} \log n)$-rank-query $(1 - \varepsilon)$-approximation algorithm.
Their $(1 - \varepsilon)$-approximation algorithm requires nearly linear rank oracle queries.
Our second result is to obtain a $(2/3 - \eps)$-approximation algorithm using a strictly linear number of rank oracle queries.

\begin{restatable}{theorem}{rankalgo} \label{thm:main_matroid_intersection_rank}
    Given two matroids $\M_1 = (V, \I_1)$ and $\M_2 = (V, \I_2)$ on the same ground set $V$, for any $\eps > 0$,
    there is a deterministic algorithm that finds a common independent set $S \in \I_1 \cap \I_2$ with $|S| \geq (2/3 - \varepsilon) r$, using $O(\frac{n}{\varepsilon})$ rank oracle queries.
\end{restatable}

We note that the time complexity of our algorithm is dominated by the rank oracle queries.
That is, our algorithm in Theorem \ref{thm:main_matroid_intersection_rank} has time complexity $O\left(\frac{n}{\varepsilon} \cdot \TR \right)$, where $\TR$ denotes the maximum running time of a single rank oracle query.

% In this paper, we also consider matroid intersection algorithms in the streaming setting.
Surprisingly, our algorithm can also be implemented in the semi-streaming model of computation. 
The streaming model for graph problems was initiated by Feigenbaum--Kannan--McGregor--Suri--Zhang \cite{feigenbaum2005graph}, which is called the semi-streaming model.
In this model, an algorithm computes over a graph stream using a limited space of $\tilde{O}(n)$ bits, where $n$ is the number of vertices.
The maximum matching problem is one of the most studied problems in the graph streaming setting \cite{feigenbaum2005graph,mcgregor2005finding,eggert2009bipartite,ahn2011laminar,goel2012communication,eggert2012bipartite,konrad2012maximum,ahn2013linear,kapralov2013better,esfandiari2016finding,kale_et_al:LIPIcs.APPROX-RANDOM.2017.15,ahn2018access,konrad2018simple,tirodkar2018deterministic,gamlath2019weighted,assadi2021auction,assadi_et_al:LIPIcs.ICALP.2021.19,assadi2022semi,fischer2022deterministic,huang20231, bernstein2023improved,konrad_et_al:LIPIcs.STACS.2023.41, feldman2024maximum, assadi2024simple}.
The first of these studies was a $(2/3 - \eps)$-approximation algorithm for the bipartite maximum matching problem by Feigenbaum et al.~\cite{feigenbaum2005graph}, which uses $\tilde{O}(n)$ bits of space and $O(\log (1/\eps) / \eps)$ passes.
% \footnote{After this, Kale--Tirodkar \cite{kale_et_al:LIPIcs.APPROX-RANDOM.2017.15} presented a semi-streaming algorithm for $(2/3 - \eps)$-approximation of maximum matching in general graphs using $4/(3\eps)$ passes. }
In particular, whether a $(1 - \epsilon)$ approximation can be achieved in $\text{poly}(1/\eps)$ passes in the semi-streaming model for general graphs was a significant open problem until it was solved by Fischer--Mitrovi{\'c}--Uitto~\cite{fischer2022deterministic}.

Since the matroid intersection problem is a well-known generalization of the bipartite matching problem, there are also several studies on the matroid intersection problem (and its generalizations) in the streaming setting in the literature \cite{chekuri2015streaming, feldman2018less, garg2023semi, huang2024robust, huang_et_al:LIPIcs.APPROX/RANDOM.2020.62, huang2023fpt, chakrabarti2015submodular}.
For the matroid intersection problem, in the semi-streaming model, we consider algorithms such that the memory usage is $O((r_1 + r_2) \textrm{polylog}(r_1 + r_2))$, where $r_1$ and $r_2$ denote the ranks of matroids $\M_1$ and $\M_2$, respectively.
We note that this memory requirement is natural when we formulate the bipartite matching problem as the matroid intersection problem.
In this work, we focus on constant-passes ($\text{poly}(1/\eps)$-passes) semi-streaming algorithms for the matroid intersection problem, particularly given the significance of constant-passes semi-streaming algorithms for the matching problem.\footnote{Based on the results from Assadi {\cite{assadi2024simple}} and Quanrud {\cite{quanrud:LIPIcs.ICALP.2024.118}}, we believe that it is able to develop a semi-streaming $(1 - \eps)$-approximation algorithm with similar space usage. However, such an algorithm would require $O(\log (n) \cdot \text{poly}(1/\eps))$ passes. In the context of research on streaming matching algorithms, the distinctions between $O(\text{poly}(1/\eps))$ passes and $O(\log (n) \cdot \text{poly}(1/\eps))$ are well recognized and crucial; see {\cite[Table 1 in the arXiv version]{assadi2024simple}}.
We also believe that the result by Blikstad--Tu \cite{blikstad2024efficient} is unlikely to immediately lead to constant-passes semi-streaming algorithm for the matroid intersection problem.
This is because, while Blikstad--Tu used some idea from the constant-passes semi-streaming algorithm of Assadi-Liu-Tarjan \cite{assadi2021auction} for the bipartite matching problem, Blikstad--Tu relies on the technique from Quanrud {\cite{quanrud:LIPIcs.ICALP.2024.118}}.
Quanrud's algorithm used some idea from the $O(\log (n) \cdot \text{poly}(1/\eps))$-passes semi-streaming algorithm by Assadi \cite{assadi2024simple}.
}
Recently, Huang--Sellier \cite{huang2024robust} presented a $(2/3 - \eps)$-approximation 
semi-streaming algorithm for the matroid intersection problem in the random-order streaming model.\footnote{We note that their algorithm requires too many queries to compute the density-based decomposition. Thus, their algorithm does not imply a nearly-linear-independence-query $(2/3 - \eps)$-approximation algorithm.}

Our third result is to obtain a $(2/3 - \eps)$-approximation semi-streaming algorithm with $O(\frac{1}{\varepsilon})$ passes in the adversary-order streaming model.

\begin{restatable}{theorem}{streamingalgo} \label{thm:main_matroid_intersection_streaming}
    Given two matroids $\M_1 = (V, \I_1)$ and $\M_2 = (V, \I_2)$ on the same ground set $V$, for any $\eps > 0$,
    there is a deterministic semi-streaming algorithm that finds a common independent set $S \in \I_1 \cap \I_2$ with $|S| \geq (2/3 - \varepsilon) r$, using $O(\frac{r_1 + r_2}{\varepsilon})$ memory and $O(\frac{1}{\varepsilon})$ passes.
\end{restatable}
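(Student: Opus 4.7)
The plan is to implement the early-terminated Blikstad algorithm behind Theorem~\ref{thm:main_matroid_intersection_ind} in the semi-streaming model. Two features make that algorithm amenable to streaming: the persistent state is just a common independent set $S \in \I_1 \cap \I_2$ of size at most $r_1 + r_2$, and the algorithm halts as soon as the augmenting structures under consideration exceed a constant ``threshold'' length sufficient to certify the $(2/3-\eps)$ guarantee rather than the much longer structures needed for a $(1-\eps)$ guarantee.

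I would build the streaming algorithm around a single \emph{augmenting round} that, given $S$ stored in memory, uses a constant number of passes over the stream to either enlarge $S$ along short augmenting paths or certify that no such paths remain. In one pass I would classify each $v \in V \setminus S$ according to its layer in the BFS decomposition of the exchange graph $G(S)$ rooted at the sources; each classification needs only $O(1)$ independence-oracle queries of the form ``$S - s + v \in \I_i$?'' for specific $s \in S$. Only elements on layers up to the threshold length are retained in memory, and within each such layer only a subset large enough to assemble vertex-disjoint augmenting paths is kept, giving a per-round memory footprint of $O((r_1+r_2)/\eps)$. A couple of additional passes then suffice to extract a maximal family of disjoint short augmenting paths from the retained layered structure and update $S$ by symmetric difference.

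Finally, I would iterate this augmenting round $O(1/\eps)$ times, mirroring the phase structure of the independence-query algorithm of Theorem~\ref{thm:main_matroid_intersection_ind}: each round either shrinks the optimality gap by a constant factor or drives the shortest augmenting structure above the threshold length, at which point the $(2/3-\eps)$-approximation is already achieved. Summing over all rounds gives $O(1/\eps)$ passes and $O((r_1+r_2)/\eps)$ peak memory. The main obstacle I anticipate is bounding the pass cost of the extraction step within a single round: Blikstad's original routine performs a matroid binary search over the ground set which, naively translated, would multiply the pass count by a logarithmic factor. I expect to handle this by observing that, once the in-memory candidate set per layer is fixed, all subsequent binary searches and path assembly can be simulated purely against the stored data via matroid-oracle queries without further stream accesses; verifying that the retained candidate set is rich enough to preserve the $(2/3-\eps)$-guarantee will be the technical heart of the argument.
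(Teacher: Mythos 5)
There is a genuine gap, and it sits exactly where you placed the ``technical heart'': your augmenting round extracts ``a maximal family of disjoint short augmenting paths from the retained layered structure and updates $S$ by symmetric difference.'' For matroid intersection this step is invalid. Unlike bipartite matching, a collection of vertex-disjoint shortest augmenting paths in the exchange graph $G(S)$ cannot in general be augmented simultaneously (the symmetric difference need not be a common independent set), and augmenting them one by one changes $G(S)$ so that the remaining paths stop being valid and the Hopcroft--Karp-style guarantee ``after a maximal family of disjoint shortest paths, the shortest augmenting path length increases'' fails. This is precisely the obstruction that the augmenting-sets machinery (Theorem \ref{afterupdateaugmentingset} and Theorem \ref{thm:no_path}) was introduced to overcome, and the paper's $(2/3-\eps)$ bound leans on it: one needs a \emph{maximal augmenting set} at distance $4$ (or an approximation of one, via Lemma \ref{lem:blikstad_key_lemma}) so that Theorem \ref{thm:no_path} pushes the shortest augmenting path length to at least $6$ and Lemma \ref{lem:twothirdspath} applies. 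Your ``each round shrinks the gap by a constant factor or drives the structure above the threshold'' claim has no analogue proof for disjoint paths in matroids. Relatedly, your memory argument is unsubstantiated: you cannot decide in a single pass which $O((r_1+r_2)/\eps)$ elements of the (potentially size-$\Theta(n)$) layers $D_1,D_3$ are ``rich enough'' to contain a large augmenting structure, and no argument is offered for why a truncated layer preserves the guarantee.

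The paper takes a different route with no disjoint-path extraction at all: it runs a single distance-$4$ phase, streaming Blikstad's \texttt{Refine} procedure $O(1/\eps)$ times (this, not $O(1/\eps)$ rounds over growing distances, is the source of the pass count). The two streaming-specific ideas are (1) replacing the order-dependent sequential loop over fresh elements by the two-pass \texttt{UpdateABA}, justified by the observation (Claim \ref{claim:how_to_implement_in_streaming_setting}) that \texttt{Refine} is correct for any arrival order, and (2) keeping $D_1,D_3$ and the fresh/removed status of their elements \emph{implicit}: only $S$, $D_2$, the current $(B_1,A_1,B_2)$, the elements demoted from selected to removed, and per-call snapshots $A_k^{(i)},B_{k+1}^{(i)}$ are stored, so that an arriving element's type can be reconstructed on the fly; this is what yields the $O((r_1+r_2)/\eps)$ memory bound. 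Your worry about matroid binary search is misplaced in this setting (the binary search of Lemma \ref{lem:binary_search_ind} is only needed in the independence-oracle model; in the stream $D_2$ is computed directly in one pass), but the missing augmenting-sets argument is essential, so the proposal as written does not establish the theorem.
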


Our algorithm in Theorem \ref{thm:main_matroid_intersection_streaming} is a generalization of the bipartite matching semi-streaming algorithm of Feigenbaum et al.~\cite{feigenbaum2005graph}.

% They presented a $(2/3 - \eps)$-approximation algorithm using $\tilde{O}(n)$ bits of space and $O(\log (1/\eps) / \eps)$ passes for the bipartite maximum matching problem.\footnote{After this, Kale--Tirodkar \cite{kale_et_al:LIPIcs.APPROX-RANDOM.2017.15} presented a semi-streaming algorithm for $(2/3 - \eps)$-approximation of maximum matching in general graphs using $4/(3\eps)$ passes. }
% Note that Kale--Tirodkar \cite{kale_et_al:LIPIcs.APPROX-RANDOM.2017.15} showed that 

% \subsection{Additional Related Work} \label{subsec:additional_related_work}

\section{Preliminaries} \label{sec:preliminaries}

\subsection{Notation} \label{subsec:matroids}

Here, we provide the basic notations and conventions used throughout the paper.

\paragraph*{Set Notation.} \label{subsec:basic_notation}

For a set $A$ and an element $x$, we will often write $A + x := A \cup \{ x \}$ and $A - x := A \setminus \{ x \}$.
For two sets $A$ and $B$, we will also write $A + B := A \cup B$ and $A - B := A \setminus B$ when co confusion can arise.

% \begin{definition}[Matroid]
\paragraph*{Matroid.}
    A pair $\M = (V, \mathcal{I})$ of a finite set $V$ and a non-empty set family $\mathcal{I} \subseteq 2^{V}$ is called a {\em matroid} if the following properties are satisfied.
    
    \begin{description}
        \item[(Downward closure property)] 
        If $S \in \mathcal{I}$ and $S' \subseteq S$, then $S' \in \mathcal{I}$.
        \item[(Augmentation property)]
        If $S, S' \in \mathcal{I}$ and $|S'| < |S|$, then there exists $v \in S \setminus S'$ such that $S' + v \in \mathcal{I}$.
    \end{description}
% \end{definition}
A set $S \subseteq V$ is called {\em independent} if $S \in \mathcal{I}$ and {\em dependent} otherwise.

% \begin{definition}[Rank]
\paragraph*{Rank.}
    For a matroid $\M = (V, \mathcal{I})$, we define the {\em rank} of $\M$ as ${\rm rank}(\M) = \max \{ |S| \mid S \in \mathcal{I} \}$.
    In addition, for any $S \subseteq V$, we define the {\em rank} of $S$ as ${\rm rank}_{\M}(S) = \max \{ |T| \mid T \subseteq S, T \in \mathcal{I} \}$.
% \end{definition}

% \begin{definition}[Matroid Intersection]
\paragraph*{Matroid Intersection.}
    Given two matroids $\mathcal{M}_1 = (V, \mathcal{I}_1)$ and $\mathcal{M}_2 = (V, \mathcal{I}_2)$ defined on the same ground set $V$, a {\em common independent set} $S$ is a set in $\I_1 \cap \I_2$.
    In the \emph{matroid intersection problem}, the aim is to find a largest common independent set, whose cardinality we denote by $r$.
% \end{definition}

\subsection{Exchange Graph and Augmenting Path} \label{subsec:augmenting_path}

We provide the standard definitions of {\em exchange graph} and {\em augmenting paths} and known lemmas used in recent fast matroid intersection algorithms.
Many matroid intersection algorithms use an approach of iteratively finding augmenting paths in the exchange graph.

\begin{definition}[Exchange Graph]
    Consider a common independent set $S \in \mathcal{I}_1 \cap \mathcal{I}_2$ of two matroids $\M_1$ and $\M_2$.
    The {\em exchange graph} is defined as a directed graph $G(S) = (V \cup \{s, t \}, E)$, with $s, t \notin V$ and $E = E' \cup E'' \cup E_s \cup E_t$, where
    \begin{align*}
        E' = & \{(u, v) \mid u \in S, v \in V \setminus S, S - u + v \in \mathcal{I}_1 \} ,\\
        E'' = & \{(v, u) \mid u \in S, v \in V \setminus S, S - u + v \in \mathcal{I}_2 \} ,\\
        E_s = & \{(s, v) \mid v \in V \setminus S, S + v \in \mathcal{I}_1 \} , \text{and} \\
        E_t = & \{(v, t) \mid v \in V \setminus S, S + v \in \mathcal{I}_2 \} .
    \end{align*}
\end{definition}

\begin{lemma}[Shortest Augmenting Path; see {\cite[Theorem 41.2]{schrijver2003combinatorial}}]
    Let $s, v_1, v_2, \dots , v_{\ell - 1}, t$ be a shortest $(s, t)$-path in the exchange graph $G(S)$ relative to a common independent set $S \in \I_1 \cap \I_2$.
    Then, $S' = S + v_1 - v_2 + \cdots - v_{\ell - 2} + v_{\ell - 1} \in \I_1 \cap \I_2$.
\end{lemma}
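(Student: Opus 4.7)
The plan is to establish $S' \in \I_1$; the claim $S' \in \I_2$ then follows by a symmetric argument (reverse the path and swap the roles of $\M_1$ and $\M_2$). First I would parse the path: because every edge of $G(S)$ is incident to $\{s,t\}$ or crosses the cut between $S$ and $V \setminus S$, the internal vertices alternate, so $v_1, v_3, \ldots, v_{\ell-1} \in V \setminus S$ and $v_2, v_4, \ldots, v_{\ell-2} \in S$. The $\M_1$-type edges along the path are $(s, v_1) \in E_s$ and $(v_{2i}, v_{2i+1}) \in E'$ for $1 \leq i \leq (\ell-2)/2$, which yield the exchange certificates $S + v_1 \in \I_1$ and $S - v_{2i} + v_{2i+1} \in \I_1$. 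Setting $U = \{v_2, v_4, \ldots, v_{\ell-2}\}$ and $W = \{v_1, v_3, \ldots, v_{\ell-1}\}$, one has $S' = (S \setminus U) \cup W$ with $|W| = |U| + 1$.

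The next step is to invoke the classical single-matroid exchange lemma (see, e.g., Schrijver~\cite{schrijver2003combinatorial}): given an independent set $X$ in a matroid $\M = (V, \I)$ and equal-size subsets $A \subseteq X$ and $B \subseteq V \setminus X$, if the bipartite graph between $A$ and $B$ with edge set $\{(a,b) : X - a + b \in \I\}$ admits a \emph{unique} perfect matching, then $(X \setminus A) \cup B \in \I$. I would apply this to $\M_1$ with $X := S + v_1$, $A := U$, $B := W \setminus \{v_1\}$, and the natural candidate matching $M := \{(v_{2i}, v_{2i+1}) : 1 \leq i \leq (\ell-2)/2\}$. To verify each pair is actually an edge, note that shortness of the path forbids any $E_s$-edge to $v_{2i+1}$ (otherwise $s \to v_{2i+1} \to v_{2i+2} \to \cdots \to t$ would shortcut it), so $v_{2i+1} \in \mathrm{span}_{\M_1}(S)$ and hence $\mathrm{span}_{\M_1}(S - v_{2i} + v_{2i+1}) = \mathrm{span}_{\M_1}(S)$. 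Since $v_1 \notin \mathrm{span}_{\M_1}(S)$, one may safely adjoin $v_1$ to conclude $X - v_{2i} + v_{2i+1} = (S + v_1) - v_{2i} + v_{2i+1} \in \I_1$. Once uniqueness of $M$ is established, the exchange lemma yields $(X \setminus U) \cup (W \setminus \{v_1\}) = S' \in \I_1$.

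The main obstacle is to prove that $M$ is the \emph{unique} perfect matching. Any competing matching would, together with $M$, produce an alternating cycle using some ``non-diagonal'' edge $(v_{2i}, v_{2j+1})$ with $j \neq i$. Forward non-diagonals ($j > i$) are ruled out immediately, since combining such an edge with the existing $E''$-edge $(v_{2j+1}, v_{2j+2})$ on the path produces the shortcut $s \to v_1 \to \cdots \to v_{2i} \to v_{2j+1} \to v_{2j+2} \to \cdots \to t$, violating the shortest-path hypothesis. Backward non-diagonals ($j < i$) are not excluded by length alone; to handle them I would adopt the standard Schrijver/Cunningham refinement and choose, among all shortest $(s,t)$-paths, the one whose vertex sequence $(v_1, v_2, \ldots, v_{\ell-1})$ is lexicographically smallest under a fixed ordering of $V$. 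A swap along an alternating cycle through a backward non-diagonal then rewires the path into a shortest $(s,t)$-path strictly smaller in this order, a contradiction. Carrying out this lex-min swap rigorously — verifying that the resulting walk is indeed a simple $(s,t)$-path of the same length and strictly smaller in the fixed ordering — is the technical crux of the proof.
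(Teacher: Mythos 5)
Your setup is the right one (it is essentially the standard Krogdahl/Schrijver argument that the cited Theorem 41.2 uses; the paper itself does not reprove this lemma), and the first two steps are sound: the alternation of the path, the verification that the diagonal pairs are edges with respect to $X = S + v_1$ (via $\mathrm{span}_{\M_1}(S - v_{2i} + v_{2i+1}) = \mathrm{span}_{\M_1}(S)$ and $v_1 \notin \mathrm{span}_{\M_1}(S)$), and the exclusion of forward edges $(v_{2i}, v_{2j+1})$, $j > i$, by the shortcut argument are all correct. But the step you yourself flag as the crux --- uniqueness of the perfect matching --- is where the proposal breaks down, and the detour you propose does not repair it. Backward edges $(v_{2i}, v_{2j+1})$ with $j < i$ are not a threat at all: once forward edges are excluded, the bipartite exchange graph between $\{v_2, v_4, \dots, v_{\ell-2}\}$ and $\{v_3, v_5, \dots, v_{\ell-1}\}$ (in path order) is triangular with a nonzero diagonal, so any perfect matching corresponds to a permutation $\sigma$ with $\sigma(i) \leq i$ for every $i$, and the only such permutation is the identity (argue by induction starting from $\sigma(1)=1$). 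That one observation closes the proof; no further structure of the path is needed.

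The lex-min refinement you substitute for this observation is both unjustified and insufficient. Unjustified, because lex-minimality of the chosen path does not delete edges of $G(S)$ (the exchange graph is determined by $S$ alone), and a backward edge between two path vertices does not in general yield any rewired simple $(s,t)$-path --- using $(v_{2i}, v_{2j+1})$ with $j<i$ would force the walk to revisit $v_{2j+1}$ --- so the claimed ``strictly lex-smaller shortest path'' need not exist; you leave exactly this step unproven. Insufficient, because the lemma as stated asserts the conclusion for an \emph{arbitrary} shortest $(s,t)$-path, whereas your argument would at best establish it for one specially selected shortest path, which is a weaker statement than the one you were asked to prove (and weaker than what the paper later relies on). Replace the lex-min detour by the triangularity argument above and the proof is complete and matches the standard one.
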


% Cunningham's matroid intersection algorithm \cite{cunningham1986improved} and recent faster matroid intersection algorithms \cite{chakrabarty2019faster,nguyen2019note,blikstad2021breaking_STOC,blikstad2021breaking}
% rely on the following two lemmas.

% \begin{lemma}[\cite{cunningham1986improved}]\label{lem:total_bound_matroid_intersection}
% For any two matroids $\M_1 = (V, \mathcal{I}_1)$ and $\M_2 = (V, \mathcal{I}_2)$, 
% if the length of the shortest augmenting path in exchange graph $G(S)$ relative to a common independent set $S \in \mathcal{I}_1 \cap \mathcal{I}_2$ is at least $d$, then $|S| \geq (1 - \frac{O(1)}{d}) \cdot r$.
% \end{lemma}

Cunningham's \cite{cunningham1986improved} matroid intersection algorithm and recent fast matroid intersection algorithms \cite{chekuri2016fast, nguyen2019note, chakrabarty2019faster, blikstad2021breaking_STOC, blikstad2021breaking} rely on the following lemma.

\begin{lemma}[from {\cite[Corollary 2.2]{cunningham1986improved}}]\label{lem:total_bound_matroid_intersection}
    % Let $S \in \I_1 \cap \I_2$ be a common independent set, and $S'$ be a larger one.
    % Then, in the exchange graph $G(S)$, there are $|S'| - |S|$ vertex-disjoint $(s, t)$-paths, all of whose internal vertices are contained in $S \cup S'$.
    Let $S$ be a common independent set which is not maximum.
    Then, there exists an $(s, t)$-path of length at most $\frac{2 |S|}{r - |S|} + 2$ in the exchange graph $G(S)$.
\end{lemma}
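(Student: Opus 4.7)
The plan is a two-step argument: first, establish the existence of $r - |S|$ internally vertex-disjoint $(s,t)$-paths in the exchange graph $G(S)$; second, apply a pigeonhole counting argument exploiting the alternating layer structure of such paths.

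\textbf{Step 1 (Vertex-disjoint paths).} Let $T \in \I_1 \cap \I_2$ be a maximum common independent set with $|T| = r$, and set $k := r - |S|$. I would show that $G(S)$ contains $k$ internally vertex-disjoint directed $(s,t)$-paths, all of whose internal vertices lie in $S \triangle T$. This follows from matroid exchange properties applied separately to each of $\M_1$ and $\M_2$. In $\M_1$, the exchange theorem yields an injection of $S \setminus T$ into $T \setminus S$ via pairs $(u,v)$ with $S - u + v \in \I_1$ (edges of $E'$), together with $k$ ``free'' elements of $T \setminus S$ that can be added to $S$ in $\M_1$ (edges of $E_s$). The symmetric construction in $\M_2$ produces the matching edges of $E''$ together with $k$ elements contributing edges of $E_t$. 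Overlaying the two pairings yields a structure on $(S \triangle T) \cup \{s,t\}$ that decomposes into $k$ internally vertex-disjoint $(s,t)$-paths in $G(S)$.

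\textbf{Step 2 (Counting).} Let $\ell$ denote the length of the shortest $(s,t)$-path in $G(S)$. Inspecting $E_s, E', E'', E_t$, any $(s,t)$-path in $G(S)$ enters $V \setminus S$ from $s$, strictly alternates between $V \setminus S$ and $S$ through $E''$ and $E'$, and leaves $V \setminus S$ to reach $t$; thus its length is even, say $2a$, with exactly $a$ internal vertices in $V \setminus S$ and $a - 1$ in $S$. Applying this to each of the $k$ vertex-disjoint paths from Step 1, each of length at least $\ell$, and summing the $S$-side contributions, vertex-disjointness forces
\begin{equation*}
k \left( \tfrac{\ell}{2} - 1 \right) \;\leq\; |S \setminus T| \;\leq\; |S|,
\end{equation*}
which rearranges to $\ell \leq \frac{2|S|}{r - |S|} + 2$, as desired.

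\textbf{Main obstacle.} Step 2 is essentially bookkeeping once Step 1 is in hand, so the technical heart lies in Step 1. The difficulty is that one must combine the exchange structures of two distinct matroids into paths that are simultaneously legal in $G(S)$ (alternating between $E'$ and $E''$, bracketed by $E_s$ and $E_t$) and internally vertex-disjoint. An equivalent formulation via Menger's theorem reduces the claim to verifying that every $(s,t)$-vertex cut in $G(S)$ has size at least $r - |S|$, which in turn can be extracted from matroid union/partition theorems applied to $\M_1$ and $\M_2$ on the ground set $S \triangle T$.
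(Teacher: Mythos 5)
The paper does not prove this lemma at all: it is quoted directly from Cunningham, so there is no in-paper argument to compare against. Your proposal is, in outline, the classical proof of Cunningham's bound (the matroid analogue of the Hopcroft--Karp distance argument), and both steps are sound. Step 2 is indeed routine once Step 1 holds: every $(s,t)$-path alternates $E_s, E'', E', \dots, E_t$, a path of length $2a$ has exactly $a-1$ internal vertices in $S$, and disjointness gives $(r-|S|)\left(\frac{\ell}{2}-1\right) \leq |S\setminus T| \leq |S|$, which is the stated bound (note Step 1 also guarantees that at least one $(s,t)$-path exists, so $\ell$ is well defined).

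The one place where your Step 1 sketch needs tightening is the interaction between the ``injection'' and the $k$ free elements: for the overlay to decompose into $k$ vertex-disjoint $(s,t)$-paths (plus irrelevant cycles), every vertex of $S \,\triangle\, T$ must have in-degree and out-degree at most one, so the image of the $\M_1$-pairing of $S\setminus T$ must avoid the $k$ elements receiving $E_s$-edges (and symmetrically for $\M_2$). Your wording does not enforce this, and if the two sets overlapped the degree argument would fail. The standard fix is exactly in the spirit of what you wrote: by repeated augmentation pick $X \subseteq T\setminus S$ with $|X| = r - |S|$ and $S\cup X \in \I_1$, then apply the Brualdi exchange-matching lemma to the equal-cardinality independent sets $S\cup X$ and $T$, which matches $S\setminus T$ perfectly to $(T\setminus S)\setminus X$ with $(S\cup X)-u+v \in \I_1$; downward closure then gives $S-u+v \in \I_1$, i.e., genuine $E'$-edges, and the matched images are disjoint from $X$ by construction. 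Doing the same in $\M_2$ and overlaying gives in-degree and out-degree exactly one on $S\,\triangle\, T$, hence $k$ internally vertex-disjoint $(s,t)$-paths. With that detail supplied, your proof is complete; the alternative Menger/matroid-union route you mention is not needed.
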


% \begin{lemma}[Monotonicity Lemma~\cite{cunningham1986improved, haselmayr2008, price2015,chakrabarty2019faster}]\label{lem:monotonicity_matroid_intersection}
%     For any two matroids $\M_1 = (V, \mathcal{I}_1)$ and $\M_2 = (V, \mathcal{I}_2)$, 
%     suppose we obtain a common independent set $S' \in \mathcal{I}_1 \cap \mathcal{I}_2$ by augmenting $S \in \mathcal{I}_1 \cap \mathcal{I}_2$ along a shortest augmenting path in $G(S)$.
%     Note that $|S'| > |S|$.
%     Let $d$ denote the distance in $G(S)$ and $d'$ denote the distance in $G(S')$.
%     Then for all $v \in V$,
%     \begin{itemize}
%          \item[(i)] If $d(s, v) < d(s, t)$, then $d'(s, v) \geq d(s, v)$. If $d(v, t) < d(s, t)$, then $d'(v, t) \geq d(v, t)$.
%          \item[(ii)] If $d(s, v) \geq d(s, t)$, then $d'(s, v) \geq d(s, t)$. If $d(v, t) \geq d(s, t)$, then $d'(v, t) \geq d(s, t)$.
%     \end{itemize}
% \end{lemma}

Chakrabarty--Lee--Sidford--Singla--Wong \cite{chakrabarty2019faster} presented a new binary search technique that can efficiently find edges in the exchange graph. (This technique was developed independently by Nguy$\tilde{{\hat{\text{e}}}}$n~\cite{nguyen2019note}.)

\begin{lemma}[Binary Search Technique from \cite{nguyen2019note,chakrabarty2019faster}] \label{lem:binary_search_ind}
    Given a matroid $\M = (V, \mathcal{I})$, an independent set $S \in \mathcal{I}$, an element $v \in V \setminus S$, and $T \subseteq S$, using $O(\log |T|)$ independence oracle queries, we can find an element $u \in T$ such that $S + v - u \in \mathcal{I}$ or otherwise determine that no such element exists.
    Furthermore, if there exists no such an element, then this procedure uses only one independence oracle query.\footnote{This is because there exists such an element if and only if $S + v - T \in \mathcal{I}$.}
\end{lemma}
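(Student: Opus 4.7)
The plan is to establish a single structural equivalence and then reduce the task to a standard binary search. Concretely, for any subset $T' \subseteq S$ (with $v \notin S$), I will show that there exists $u \in T'$ with $S + v - u \in \I$ if and only if $(S \setminus T') + v \in \I$. The $(\Rightarrow)$ direction is immediate from downward closure, since $(S \setminus T') + v \subseteq S + v - u$ whenever $u \in T'$.

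For the $(\Leftarrow)$ direction, I will iterate the augmentation property. Set $X_0 = (S \setminus T') + v$ and $Y = S$, so that $|X_0| = |S| - |T'| + 1 < |S| = |Y|$ when $|T'| \geq 2$. Since $v \notin S$, we have $Y \setminus X_0 = T'$, and an easy induction shows that whenever $|X_i| < |Y|$ the augmenting element produced by augmentation must lie in $T'$. After $|T'| - 1$ steps, we obtain an independent set of size $|S|$ contained in $X_0 \cup T' = S + v$, which must therefore equal $S + v - u$ for some $u \in T'$. (The case $|T'| = 1$ is immediate, since $(S \setminus T') + v = S + v - u$ for the unique $u \in T'$.)

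Equipped with the equivalence, the algorithm is transparent. First, query whether $(S \setminus T) + v \in \I$ using one independence oracle query. If the answer is \emph{no}, return ``no such element'' --- this delivers the one-query behavior promised in the negative case. If \emph{yes}, run binary search: maintain a subset $T' \subseteq T$, initialized to $T$, with the invariant that some valid $u$ lies in $T'$. At each step, partition $T' = T'_1 \cup T'_2$ into halves of roughly equal size, query whether $(S \setminus T'_1) + v \in \I$, and by the equivalence recurse on $T'_1$ if the answer is yes (a valid $u$ then lies in $T'_1$) or on $T'_2$ if no (the invariant forces such $u$ into $T'_2$). Terminate when $|T'| = 1$ and return its unique element.

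Each binary-search step uses exactly one independence oracle query and halves $|T'|$, for a total of $O(\log |T|)$ queries on top of the initial test. The main point to get right is the choice of set to probe: querying $(S \setminus T'_1) + v$ (rather than individual $S + v - u$ or $(S \setminus T'_2) + v$) is what lets \emph{both} outcomes of a single query narrow the search to a half, yielding the logarithmic bound rather than a linear one. Beyond that, the proof is routine, and the key technical step is really just the augmentation argument establishing the equivalence.
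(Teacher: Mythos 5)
Your proof is correct, and it follows the standard argument from the cited works: the key equivalence you establish (some $u \in T'$ works iff $(S \setminus T') + v \in \I$) is exactly the fact the paper records in its footnote as $S + v - T \in \I$, and the halving search on top of it is the binary search of Nguy\~{\^e}n and Chakrabarty et al. The paper itself does not reprove the lemma, so your write-up simply supplies the intended proof, including the one-query negative case via the initial test on $(S \setminus T) + v$.
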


Let \texttt{FindExchange}$(\M, S, v, T)$ be the procedure that implements Lemma \ref{lem:binary_search_ind}.

Let $D_i$ be the set of elements $v \in V$ such that the distance from $s$ to $v$ is exactly $i$ in the exchange graph $G(S)$.
Chakrabarty et al.~\cite{chakrabarty2019faster} showed that the distance layers $D_i$ can be computed efficiently by using the binary search technique.
In their algorithm, they compute odd and even layers separately because \texttt{FindExchange} cannot find both incoming and outgoing edges due to restrictions on the allowed queries; see \cite[Algorithm 6 and Lemma 19]{chakrabarty2019faster}.
Their algorithm requires $O(n r + r^2 \log r)$ independence oracle queries to compute all distance layers $D_1, D_2, \ldots , D_{2r+1}$; see \cite[Remark just after Proof of Theorem 18]{chakrabarty2019faster}.
In our matroid intersection algorithm, we only need to compute $D_1$, $D_2$, and $D_3$.
Their algorithm can compute $D_1$, $D_2$, and $D_3$ using $O(n + r \log r)$ independence oracle queries.
% In the same way as the BFS algorithm by Chakrabarty et al., we can compute $D_1$, $D_2$ and $D_3$ using $O(n + r \log r)$ independence oracle queries.
See \texttt{GetDistance} (Algorithm \ref{alg:GetDistance}) for the pseudocode of the algorithm.

\begin{algorithm}[t]
    \KwInput{a common independent set $S \in \I_1 \cap \I_2$ }
    \KwOutput{distance layers $D_1 \subseteq V \setminus S, D_2 \subseteq S$, and $D_3 \subseteq V \setminus S$}
    $D_1 \gets \emptyset$, $D_2 \gets \emptyset$, $D_3 \gets \emptyset$ \\
    \For{$v \in V \setminus S$ with $S + v \in \I_1$} { \label{line:bfs1}
        $D_1 \gets D_1 + v$
    }
    $Q \gets D_1$, $T \gets S$ \\
    \While{$Q$ contains some $v$} {
        \While{$u = $\texttt{\textup{ FindExchange}}$(\M_2, S, v, T)$ satisfies $u \neq \emptyset$} { \label{line:useoffindexchange}
            $D_2 \gets D_2 + u$, $T \gets T - u$ \\
        }
        $Q \gets Q - v$ \\
    }
    \For{$v \in V \setminus (S \cup D_1)$ with $S + v - D_2 \in \I_1$} { \label{line:bfs2}
        $D_3 \gets D_3 + v$
    }
    \caption{\texttt{GetDistance}}\label{alg:GetDistance}
\end{algorithm}

\begin{lemma}[follows from {\cite[Lemma 19]{chakrabarty2019faster}}] \label{lem:find_dist_layers_ind}
    Given a common independent set $S \in \I_1 \cap \I_2$ of two matroids $\M_1$ and $\M_2$, using $O(n + r \log r)$ independence oracle queries, we can find the distance layers $D_1 \subseteq V \setminus S, D_2 \subseteq S$, and $D_3 \subseteq V \setminus S$.
\end{lemma}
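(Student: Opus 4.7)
The plan is to verify correctness of \texttt{GetDistance} and then bound its query complexity, matching each block of the algorithm to the relevant part of the exchange-graph definition.

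For $D_1$, the loop on line \ref{line:bfs1} directly checks the condition $(s,v) \in E_s$, i.e., $S + v \in \I_1$, for each $v \in V \setminus S$. For $D_2$, by definition $u \in D_2$ iff $u \in S$ and some $v \in D_1$ satisfies $S - u + v \in \I_2$. The algorithm iterates over $v \in D_1$ and, for each, repeatedly calls \texttt{FindExchange}$(\M_2, S, v, T)$ with $T$ equal to $S$ minus the already-added $D_2$-elements, discovering all $u \in T$ with $S + v - u \in \I_2$; the inner loop terminates precisely when $S + v - T \notin \I_2$. Removing an element from $T$ once it is placed into $D_2$ is safe because every $u \in D_2$ needs only one witnessing $v$, so elements already in $D_2$ can be ignored in later iterations. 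For $D_3$, by definition $v \in D_3$ iff $v \in V \setminus (S \cup D_1)$ and some $u \in D_2$ satisfies $S - u + v \in \I_1$; by Lemma \ref{lem:binary_search_ind} this is equivalent to the single check $S + v - D_2 \in \I_1$, which is exactly what line \ref{line:bfs2} performs.

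For query complexity, I would split the count into three phases. Lines \ref{line:bfs1} and \ref{line:bfs2} contribute $O(n)$ queries each. The computation of $D_2$ is the delicate step: each \emph{successful} \texttt{FindExchange} call (returning a nonempty $u$) uses $O(\log r)$ queries by Lemma \ref{lem:binary_search_ind}, and such calls are in bijection with elements added to $D_2$, giving $O(|D_2| \log r) = O(r \log r)$ in total. Each \emph{unsuccessful} call uses only a single oracle query (the footnote of Lemma \ref{lem:binary_search_ind}, via the equivalence that $\exists u \in T$ with $S + v - u \in \I$ iff $S + v - T \in \I$), and exactly one such call terminates the inner loop for each $v \in D_1$, contributing $|D_1| = O(n)$. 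Summing yields $O(n + r \log r)$ independence oracle queries.

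The main obstacle is precisely this last accounting: the naive bound treating every \texttt{FindExchange} call as $O(\log r)$ would give $O(n \log r)$ queries for phase two, losing the desired near-linear dependence on $n$. The tighter bound relies crucially on the fact that the nonexistence check $S + v - T \in \I_2$ costs only one oracle query, which is already packaged into Lemma \ref{lem:binary_search_ind}.
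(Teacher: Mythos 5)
Your proposal is correct and follows essentially the same route as the paper's proof: correctness via the breadth-first-search structure of \texttt{GetDistance}, and the query count split into $O(\log r)$ per successful \texttt{FindExchange} call (at most $|D_2|\leq r$ of them) plus one query per unsuccessful call (one per $v\in D_1$, so at most $n$), together with $O(n)$ queries for the $D_1$ and $D_3$ loops. The accounting subtlety you highlight---charging only a single query to each terminating (empty-output) call---is exactly the point the paper's proof relies on as well.
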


For completeness, we give a proof of Lemma \ref{lem:find_dist_layers_ind}.

\begin{proof}
    The procedure \texttt{GetDistance} (Algorithm \ref{alg:GetDistance}) simply performs a breadth first search in the exchange graph $G(S)$.
    Thus, the procedure \texttt{GetDistance} correctly computes the distance layers $D_1$, $D_2$, and $D_3$.
    Here, each element $u \in D_2$ is found by \texttt{FindExchange} only once.
    Thus, the number of \texttt{FindExchange} calls that do not output $\emptyset$ is $|D_2| \leq r$, and the number of \texttt{FindExchange} calls that output $\emptyset$ is $|D_1| \leq n$.
    Hence, by Lemma \ref{lem:binary_search_ind}, the number of independence oracle queries used in Line \ref{line:useoffindexchange} is $O(n + r \log r)$.
    In addition, the number of independence oracle queries used in Lines \ref{line:bfs1} and \ref{line:bfs2} is $O(n)$, which completes the proof.
\end{proof}

\section{Augmenting Sets Technique} \label{subsec:augmenting_sets}

In this section, we recall the definition and properties of {\em augmenting sets}, which were introduced as a generalization of {\em augmenting paths} by Chakrabarty--Lee--Sidford--Singla--Wong \cite{chakrabarty2019faster}.
% Chakrabarty--Lee--Sidford--Singla--Wong \cite{chakrabarty2019faster} introduced {\em augmenting sets} as a generalization of the {\em augmenting path}.
The augmenting sets technique plays a crucial role in the $(1 - \eps)$-approximation algorithms of Chakrabarty et al.~\cite{chakrabarty2019faster} and Blikstad \cite{blikstad2021breaking}.

In our matroid intersection algorithm, we only need to find an augmenting set in the exchange graph whose shortest $(s, t)$-path length is $4$.
Thus, we introduce the definition and properties of augmenting sets when restricted to the case where the length of a shortest augmenting path is $4$.

\begin{definition}[Augmenting Sets from {\cite[Definition 24]{chakrabarty2019faster}}]
    Let $S \in \I_1 \cap \I_2$ be such that shortest augmenting paths in $G(S)$ have length $4$.
    We say that a collection of sets $\Pi := (B_1, A_1, B_2)$ form an \emph{augmenting set} (of \emph{width} $w$) in $G(S)$ if the following conditions are satisfied:
    
    \begin{enumerate}[(a)]
        \item $B_1 \subseteq D_{1}$, $A_1 \subseteq D_{2}$, and $B_2 \subseteq D_{3}$.
        \item $|B_1| = |A_1| = |B_2| = w$
        \item $S + B_1 \in \I_1$
        \item $S + B_1 - A_1 \in \I_2$
        \item $S - A_1 + B_2 \in \I_1$
        \item $S + B_2 \in \I_2$
    \end{enumerate}
\end{definition}

\begin{theorem}[from {\cite[Theorem 25]{chakrabarty2019faster}}] \label{afterupdateaugmentingset}
    Let $S \in \I_1 \cap \I_2$ be such that shortest augmenting paths in $G(S)$ have length $4$.
    Let $\Pi := (B_1, A_1, B_2)$ be an augmenting set in the exchange graph $G(S)$.
    Then, the set $S' := S \oplus \Pi := S + B_1 - A_1 + B_2$ is a common independent set.
\end{theorem}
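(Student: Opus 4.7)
The plan is to verify $S' \in \I_1$ and $S' \in \I_2$ separately; the two cases are structurally symmetric under $(\M_1, B_1) \leftrightarrow (\M_2, B_2)$ (with the exchange graph $G(S)$ reversed), so I would describe the $\M_1$ case in detail and remark that the $\M_2$ case is analogous (using (d), (f), and $B_1 \subseteq D_1$ in place of (c), (e), and $B_2 \subseteq D_3$). A straightforward count first gives $|S'| = |S| + w$: indeed $A_1 \subseteq S$, $B_1, B_2 \subseteq V \setminus S$, and the three sets live in the pairwise disjoint BFS layers $D_1, D_2, D_3$.

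For $\M_1$-independence I would argue by contradiction. Suppose $C \subseteq S'$ is an $\M_1$-circuit. Condition (c) $S + B_1 \in \I_1$ rules out $C \subseteq S + B_1$, so $C \cap B_2 \neq \emptyset$; condition (e) $S - A_1 + B_2 \in \I_1$ rules out $C \subseteq S - A_1 + B_2$, so $C \cap B_1 \neq \emptyset$. Fix $b_1 \in C \cap B_1$ and $b_2 \in C \cap B_2$. The BFS-layer assumption $b_2 \in D_3$ forces $S + b_2 \notin \I_1$ (otherwise $b_2$ would sit in $D_1$), so the fundamental circuit $C_0 := \mathrm{circ}_{\M_1}(b_2, S) \subseteq S + b_2$ is well-defined and contains $b_2$. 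Applying the strong circuit-exchange property in $\M_1$ to $C$ and $C_0$ (which share $b_2$, with $b_1 \in C \setminus C_0$ since $C_0 \subseteq S + b_2$ and $b_1 \in B_1 \subseteq V \setminus S$), I obtain a new $\M_1$-circuit $C^{(1)} \subseteq (C \cup C_0) - b_2$ that still contains $b_1$.

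The key observation driving the proof is that $C_0 \cap B_2 = \{b_2\}$, so $|C^{(1)} \cap B_2| < |C \cap B_2|$, while the containment $C^{(1)} \subseteq S + B_1 + B_2$ is preserved (since $C \subseteq S + B_1 + B_2$ and $C_0 \subseteq S + B_2$). Iterating this exchange step---pairing the current circuit with the fundamental circuit of some remaining $b_2' \in B_2$---yields a sequence of $\M_1$-circuits $C^{(i)}$ whose intersection size with $B_2$ strictly decreases, while $b_1 \in C^{(i)}$ is preserved and containment in $S + B_1 + B_2$ is maintained. After finitely many iterations we reach $C^{(k)}$ with $C^{(k)} \cap B_2 = \emptyset$, forcing $C^{(k)} \subseteq S + B_1$, which contradicts (c).

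The main step to check carefully is that strong circuit-exchange can be invoked at each iteration so as to preserve $b_1$ inside the circuit and to keep the circuit contained in $S + B_1 + B_2$. Both invariants follow routinely from the statement of the strong circuit-exchange property (which lets us demand $b_1$ in the exchanged circuit) and from the containment $C_0^{(i)} \subseteq S + b_2^{(i)} \subseteq S + B_2$ at every step. Once these invariants are in place, the strictly decreasing potential $|C^{(i)} \cap B_2|$ closes the argument. The analogous iteration for $\M_2$ (using fundamental $\M_2$-circuits of $b_1 \in B_1$, well-defined because $b_1 \in D_1$ together with the length-$4$ assumption on shortest augmenting paths forces $S + b_1 \notin \I_2$) yields $S' \in \I_2$.
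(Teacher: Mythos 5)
Your proof is correct, and it is worth noting that the paper itself gives no proof of this statement: it is imported verbatim from Chakrabarty--Lee--Sidford--Singla--Wong, where it is established in the general setting of augmenting sets of arbitrary length by a different, more global layer-by-layer argument. Your argument is a self-contained alternative tailored to the length-$4$ case, and all the delicate points check out: $b_2 \in D_3$ indeed forces $S + b_2 \notin \I_1$ (else $b_2$ would lie in $D_1$), and $b_1 \in D_1$ together with the absence of length-$2$ augmenting paths forces $S + b_1 \notin \I_2$, so the fundamental circuits you need exist; each fundamental circuit meets $B_2$ (resp.\ $B_1$) only in the eliminated element because $B_1, B_2 \subseteq V \setminus S$, so the potential $|C^{(i)} \cap B_2|$ (resp.\ $|C^{(i)} \cap B_1|$) strictly decreases; containment in $S + B_1 + B_2$ is preserved since every fundamental circuit lies in $S$ plus one new element; and the terminal circuit contradicts condition (c) (resp.\ (f)). One point you use implicitly and should make explicit: retaining $b_1$ (resp.\ $b_2$) via \emph{strong} circuit elimination is not merely decorative --- it is what guarantees $C^{(i)} \neq C_0^{(i)}$ at every step, since $S + B_2$ need not be $\M_1$-independent, so weak elimination alone could stall. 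Compared with the cited source, your route buys a short, purely circuit-axiomatic proof of exactly the special case this paper needs, at the price of not covering augmenting sets of longer lengths, which the paper never uses.
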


% \begin{theorem}[from {\cite[Theorem 33]{chakrabarty2019faster}}]
%     Let $\Pi \subseteq \tilde{\Pi}$ be two augmenting sets in $G(S)$.
%     Then, $\tilde{\Pi} \setminus \Pi$ is an augmenting set in $G(S \oplus \Pi)$.
% \end{theorem}

Here, we recall the definition and property of {\em maximal augmenting sets}, which correspond to a maximal collection of shortest augmenting paths such that augmentation along them must increase the $(s, t)$-distance.
% found in s single phase of Cunningham's algorithm.
In the $(1 - \eps)$-approximation algorithms of Chakrabarty et al.~and Blikstad, they repeatedly find a maximal augmenting set and augment along it.

\begin{definition}[Maximal Augmenting Sets from {\cite[Definitions 31 and 35]{chakrabarty2019faster}}]
    Let $S \in \I_1 \cap \I_2$ be such that shortest augmenting paths in $G(S)$ have length $4$.
    Let $\Pi := (B_1, A_1, B_2)$ and $\tilde{\Pi} := (\tilde{B_1}, \tilde{A_1}, \tilde{B_2})$ be two augmenting sets.
    We say $\tilde{\Pi}$ \emph{contains} $\Pi$ if $B_1 \subseteq \tilde{B_1}$, $A_1 \subseteq \tilde{A_1}$ and $B_2 \subseteq \tilde{B_2}$.
    We use the notation $\Pi \subseteq \tilde{\Pi}$ to denote this.
    An augmenting set $\Pi$ is called \emph{maximal} if there exists no other augmenting set $\tilde{\Pi}$ containing $\Pi$.
\end{definition}

\begin{theorem}[from {\cite[Theorem 36]{chakrabarty2019faster}}] \label{thm:no_path}
    Let $S \in \I_1 \cap \I_2$ be such that shortest augmenting paths in $G(S)$ have length $4$.
    Let $\Pi$ be a maximal augmenting set in the exchange graph $G(S)$.
    Then, there is no augmenting path of length at most $4$ in $G(S \oplus \Pi)$.
\end{theorem}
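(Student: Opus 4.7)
The plan is to prove the statement by contradiction. Assume $G(S')$, with $S' := S \oplus \Pi = S + B_1 - A_1 + B_2$, contains an augmenting $(s,t)$-path $P$ of length at most $4$. Since augmenting $(s,t)$-paths alternate between $\M_1$-edges and $\M_2$-edges, $|P|$ is even, so $|P| \in \{2, 4\}$. In each case, I will use $P$ to construct a strictly larger augmenting set $\tilde{\Pi} \supsetneq \Pi$ in $G(S)$, contradicting maximality. The intuition is that an augmenting set of width $w$ corresponds to a family of $w$ parallel length-$4$ shortest augmenting paths in $G(S)$; any surviving augmenting path in $G(S')$ represents a ``new'' shortest path that can be combined with $\Pi$ to yield an augmenting set of width $w + 1$.

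\textbf{Setup.} First I would record how independence relations transfer between $S$ and $S'$, using $S' \setminus S = B_1 \cup B_2$ and $S \setminus S' = A_1$. Repeated applications of the matroid augmentation axiom let me convert statements of the form $S' + X - Y \in \I_i$ into statements of the form $S + X' - Y' \in \I_i$, where $X' \subseteq X \cup B_1 \cup B_2$ and $Y' \subseteq Y \cup A_1$. I would also use the distance layers $D_1, D_2, D_3$ of $G(S)$ to classify the admissible positions of the elements of $P$ when pulled back to $G(S)$: each element of $V \setminus S'$ appearing on $P$ must either lie in some $D_i$ or belong to $A_1$, and each element of $S'$ appearing on $P$ must either lie in $S \setminus A_1$ or inside $B_1 \cup B_2$.

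\textbf{Case $|P| = 2$.} Here $P$ gives a single element $v \in V \setminus S'$ with $S' + v \in \I_1 \cap \I_2$. The augmentation axiom applied to $S + B_1 \in \I_1$ and $S + B_1 - A_1 + B_2 + v \in \I_1$ extracts some $v' \in B_2 \cup \{v\}$ with $S + B_1 + v' \in \I_1$; a symmetric argument in $\M_2$ yields $v'' \in B_1 \cup \{v\}$ with $S + B_2 + v'' \in \I_2$. A further $\I_2$-exchange between $S + B_1 - A_1$ and $S + B_1 + v'$ produces an element $a \in S \setminus A_1$ such that $S + B_1 + v' - (A_1 + a) \in \I_2$. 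Setting $\tilde{\Pi} := (B_1 + v', A_1 + a, B_2 + v'')$ and verifying the six conditions of an augmenting set (using similar pullback exchanges) gives $\tilde{\Pi} \supsetneq \Pi$, the desired contradiction. The subcase $v \in A_1$ is handled symmetrically, with $a = v$ and the roles of $B_1, B_2$ extended analogously.

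\textbf{Case $|P| = 4$ and main obstacle.} A length-$4$ path $P = s \to v_1 \to u \to v_2 \to t$ in $G(S')$ yields four independence statements in $\I_1$ and $\I_2$ relating $S'$, $v_1$, $u$, $v_2$. I would split on whether $u \in S \setminus A_1$, $u \in B_1$, or $u \in B_2$, and in each subcase pull back these statements as above to produce an augmenting set $\tilde{\Pi}$ of width $w + 1$ in $G(S)$, again contradicting maximality. The main obstacle is this detailed case analysis and the careful verification of all six conditions of $\tilde{\Pi}$ via repeated applications of the matroid exchange axiom; the subcases $u \in B_1 \cup B_2$ are the most delicate, as they essentially require rerouting $P$ through the existing structure of $\Pi$ before one can identify a genuinely new column to append.
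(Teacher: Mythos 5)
Your high-level plan (derive a contradiction with maximality by converting a short augmenting path in $G(S\oplus\Pi)$ into a strictly larger augmenting set in $G(S)$) is indeed the strategy behind the cited result; note the paper itself does not prove this theorem but imports it from Chakrabarty et al.\ (Theorem 36), where precisely this conversion is the technical heart of a long section. That heart is missing from your proposal. To contradict maximality, the enlarged triple $\tilde{\Pi}$ must satisfy \emph{all} conditions of an augmenting set, including condition (a): the new elements must lie in the original layers $D_1$, $D_2$, $D_3$ of $G(S)$. You never verify this, and it is exactly the hard part: elements of the path $P$ in $G(S\oplus\Pi)$ may lie in $A_1$, in $B_1\cup B_2$, or at $G(S)$-distance $5$ or more (or be unreachable), since $S'+v\in\I_1$ does not imply $S+v\in\I_1$. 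Ruling out or rerouting such ``shortcut'' configurations is what requires the correspondence and layer-monotonicity machinery in Chakrabarty et al.; saying ``pull back these statements as above'' for the length-$4$ case is a restatement of the goal, not an argument.

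The one case you work out in detail also has concrete flaws. The step ``a further $\I_2$-exchange between $S+B_1-A_1$ and $S+B_1+v'$'' is not an application of the augmentation axiom: $S+B_1+v'$ is only known to be $\I_1$-independent, and even via a circuit argument in $\M_2$ the removable element need not lie in $S\setminus A_1$ (the circuit could sit inside $B_1+v'$), nor is it shown to lie in $D_2$. Your subcase $v\in A_1$ ``with $a=v$'' cannot work as stated, since adding $v$ to $A_1$ does not increase its width. In fact, carrying out the layer analysis you skipped shows the $v\notin A_1$ subcase behaves quite differently from your sketch: from $S+B_1+v'\in\I_1$ and downward closure, $v'\in B_2\subseteq D_3$ is impossible, so $v'=v$ and $v\in D_1$; symmetrically $v''=v$ and $S+v\in\I_2$, so $(s,v,t)$ is a length-$2$ augmenting path in $G(S)$, contradicting the hypothesis directly---no enlarged augmenting set is needed there. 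So the proposal identifies the right target but leaves the essential content of the theorem (the length-$4$ case and the layer-membership of the extension) unproved, and the exchange chain it does give is not sound as written.
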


Here, we recall the definition of {\em partial augmenting sets}, which are the relaxed form of augmenting sets.
In the algorithms of Chakrabarty et al.~and Blikstad for finding a maximal augmenting set, they keep track of the partial augmenting set, which is iteratively made close to a maximal augmenting set through some refine procedures.

\begin{definition}[Partial Augmenting Sets from {\cite[Definition 37]{chakrabarty2019faster}}]
    Let $S \in \I_1 \cap \I_2$ be such that shortest augmenting paths in $G(S)$ have length $4$.
    We say that $\Phi := (B_1, A_1, B_2)$ forms a \emph{partial augmenting set} in $G(S)$ if it satisfies the conditions (a), (c), (e), (f) of an \emph{augmenting set}, and the following two relaxed conditions:

    \begin{enumerate}
        \item[(b)] $|B_1| \geq |A_1| \geq |B_2|$
        \item[(d)] ${\rm rank}_2(S + B_1 - A_1) = {\rm rank}_2(S)$ 
    \end{enumerate}
\end{definition}

Chakrabarty et al.~presented an efficient algorithm to convert any partial augmenting set $\Phi$ into an augmenting set $\Pi$.

\begin{lemma}[from {\cite[Lemma 38]{chakrabarty2019faster}}] \label{lem:extract_augmenting_set}
    Given a partial augmenting set $\Phi = (B_1, A_1, B_2)$, using $O(n)$ independence oracle queries, we can find an augmenting set $\Pi = (B'_1, A'_1, B'_2)$ such that $B'_1 \subseteq B_1$, $A'_1 \subseteq A_1$ and $B'_2 = B_2$.
\end{lemma}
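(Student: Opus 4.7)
The plan is to obtain $\Pi = (B'_1, A'_1, B_2)$ by trimming $A_1$ and $B_1$ down to size $|B_2|$ via two successive greedy matroid-extension steps, leaving $B_2$ untouched (so that condition (f) for $\Pi$ is inherited directly from $\Phi$). The whole procedure will boil down to iterating the augmentation property of a matroid, and correctness will follow from a ``greedy never gets stuck'' argument.

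First I would construct $A'_1$. Starting from the set $X_0 := S - A_1 + B_2$, which lies in $\I_1$ by condition (e) of $\Phi$, I iterate over $A_1$ in an arbitrary order, querying for each $v \in A_1$ whether the current set plus $v$ is still in $\I_1$ and adding $v$ if yes. Let $T$ be the elements added and put $A'_1 := A_1 \setminus T$. The intended guarantee is $|T| = |A_1| - |B_2|$, i.e.\ $|A'_1| = |B_2|$, which I would verify by the usual exchange argument: if $|T|$ were strictly smaller, the final $X$ would have $|X| < |S|$, so augmentation applied to $X, S \in \I_1$ would produce some $v \in S \setminus X = A_1 \setminus T$ with $X + v \in \I_1$; but then at the moment we tested $v$ the prefix set $X_{\mathrm{cur}} \subseteq X$ would already satisfy $X_{\mathrm{cur}} + v \in \I_1$ by downward closure, contradicting the rejection of $v$. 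This yields $S - A'_1 + B_2 \in \I_1$ with $|A'_1| = |B_2|$, i.e.\ condition (e) of an augmenting set.

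Second, I would construct $B'_1$. Starting from $Y_0 := S - A'_1$, which is in $\I_2$ by downward closure from $S \in \I_2$, I greedily iterate over $B_1$, keeping each $v$ with $Y_{\mathrm{cur}} + v \in \I_2$; call the resulting set $B'_1$. To certify that $|B'_1| = |B_2|$, I need the greedy extension to reach size $|S|$ using only elements of $B_1$. This is the main obstacle, and it is where the relaxed condition (d) of a partial augmenting set pays off: the rank condition $\mathrm{rank}_2(S + B_1 - A_1) = |S|$ together with $A'_1 \subseteq A_1$ gives $S + B_1 - A_1 \subseteq S + B_1 - A'_1$, hence $\mathrm{rank}_2(S - A'_1 + B_1) \geq |S|$. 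Fixing a witness $W \subseteq S - A'_1 + B_1$ with $W \in \I_2$ and $|W| = |S|$, and using $B_1 \cap S = \emptyset$ to observe that $W \setminus Y \subseteq B_1$ throughout, the same ``greedy cannot get stuck'' argument as in step~1 (this time applied between the current $Y$ and $W$) shows that the procedure successfully adds exactly $|B_2|$ elements of $B_1$, giving $S - A'_1 + B'_1 \in \I_2$ and therefore condition (d).

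Finally I would check that $\Pi = (B'_1, A'_1, B_2)$ satisfies all six conditions of an augmenting set: (a) is inherited from $\Phi$, (b) holds by construction, (c) reduces to $S + B_1 \in \I_1$ via downward closure, (d) and (e) were established above, and (f) is inherited. The query count is $|A_1| + |B_1| = O(n)$, since each greedy step performs one independence query per candidate element. Apart from the rank-to-extension translation in step~2, every ingredient is a direct application of downward closure and the matroid augmentation axiom.
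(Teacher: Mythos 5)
Your construction is sound in outline (fix $B_2' = B_2$, trim $A_1$ by greedily re-adding its elements to $S - A_1 + B_2$ in $\M_1$, then pick $B_1' \subseteq B_1$ by greedily extending $S - A_1'$ in $\M_2$), the query count is clearly $O(n)$, and note that the paper itself does not prove this lemma but imports it from Chakrabarty et al.\ (Lemma 38), so a self-contained argument like yours is a legitimate route. However, as written there is a gap in the verification of condition (b), i.e.\ $|B_1'| = |A_1'| = |B_2|$. In both greedy steps you only prove a one-sided bound: your exchange argument shows the greedy cannot terminate with a set of size less than $|S|$, which gives $|A_1'| \leq |B_2|$ in step 1 and $|B_1'| \geq |A_1'|$ in step 2, but it does not rule out the greedy adding \emph{too many} elements, which would give $|A_1'| < |B_2|$ or $|B_1'| > |A_1'|$. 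Since $B_2' = B_2$ is fixed and the three sets must have equal size, the ``exactly'' in both steps is an essential part of the claim, and it does not follow from the conditions of $\Phi$ that you actually invoke.

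The missing ingredients are upper bounds on the ranks of the ground sets the greedies run over, and they come from the distance-layer structure that your proof never uses. Concretely, every $v \in B_2 \subseteq D_3$ satisfies $S + v \notin \I_1$ (otherwise $v$ would lie in $D_1$), so $B_2$ lies in the $\M_1$-span of $S$ and $\mathrm{rank}_{\M_1}(S \cup B_2) = |S|$; hence the step-1 greedy can never exceed size $|S|$. Likewise, every $v \in B_1 \subseteq D_1$ satisfies $S + v \notin \I_2$ (otherwise $(s, v, t)$ would be a length-$2$ augmenting path, contradicting that shortest augmenting paths in $G(S)$ have length $4$), so $\mathrm{rank}_{\M_2}(S - A_1' + B_1) \leq \mathrm{rank}_{\M_2}(S \cup B_1) = |S|$ and the step-2 greedy cannot overshoot either. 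With these two observations, your one-sided inequalities become the equalities $|A_1'| = |B_2|$ and $|B_1'| = |A_1'|$, and the verification of (a), (c)--(f) goes through exactly as you state. (Alternatively, one can simply stop each greedy pass once the target number of elements has been added; your ``greedy never gets stuck'' argument already shows the target is reachable, and capping sidesteps the span facts.) So the gap is real but readily repaired.
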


Chakrabarty et al.~\cite{chakrabarty2019faster} presented an algorithm to find a maximal augmenting set in the exchange graph $G(S)$, where shortest augmenting paths have length $2(\ell + 1)$, using $O(n^{1.5} \sqrt{\ell \log r})$ independence oracle queries; see \cite[Proof of Theorem 48 and Proof of Theorem 21]{chakrabarty2019faster}.
Blikstad \cite{blikstad2021breaking} improved the query complexity and presented an algorithm to find a maximal augmenting set using $O(n \sqrt{r \log r})$ independence oracle queries; see \cite[Lemma 35 and Proof of Theorem 1]{blikstad2021breaking}.
% In our algorithm, we apply Blikstad's algorithm, but terminate it before completion.

In Blikstad's algorithm for finding a maximal augmenting set, we keep track of a partial augmenting set and iteratively update it to closer to a maximal augmenting set.
To obtain a fast algorithm, Blikstad combines two algorithms \texttt{\textup{Refine}} \cite[Algorithm 4]{blikstad2021breaking} and \texttt{\textup{RefinePath}} \cite[Algorithm 5]{blikstad2021breaking}; see \cite[Algorithm 6 and Lemma 35]{blikstad2021breaking}.
Let $p \in [1, r]$ be the parameter that controls the trade-off of the two algorithms. 
He first applies \texttt{Refine} to find a partial augmenting set that is close enough to a maximal augmenting set, which uses $O(n r / p)$ independence oracle queries.
Then, he applies \texttt{\textup{RefinePath}} until the partial augmenting set becomes a maximal augmenting set, which uses $O(n p \log r)$ independence oracle queries.

In our algorithm, we apply only the first part \texttt{Refine} of Blikstad's algorithm, as formally stated as follows.

\begin{lemma}[follows from {\cite[Lemma 35]{blikstad2021breaking}}] \label{lem:blikstad_key_lemma}
    For any $p \in [1, r]$, given a common independent set $S \in \I_1 \cap \I_2$ and the distance layers $D_1, D_2,$ and $D_3$, using $O(\frac{n r}{p})$ independence oracle queries,
    we can find a partial augmenting set $\Phi = (B_1, A_1, B_2)$ such that the following properties hold.

    \begin{enumerate}[(i)]
        \item There is an augmenting set $\Pi = (B'_1, A'_1, B'_2)$ such that $B'_1 \subseteq B_1$, $A'_1 \subseteq A_1$ and $B'_2 = B_2$.
        \item We have $|B_1| - |B_2| \leq p$.
        \item There is a maximal augmenting set $\tilde{\Pi}$ of width at most $|B_1|$ in $G(S)$.  
        % \item The width of $\tilde{\Pi}$ is at most $|B_1|$.
    \end{enumerate}
\end{lemma}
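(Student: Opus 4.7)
The plan is to invoke only the first phase of Blikstad's procedure for computing a maximal augmenting set, which is the subroutine \texttt{Refine} from \cite[Algorithm 4]{blikstad2021breaking}, run on the input $(S, D_1, D_2, D_3)$ with the trade-off parameter set to $p$. Blikstad's analysis of his combined algorithm \cite[Algorithm 6 and Lemma 35]{blikstad2021breaking} decomposes the total cost into the $O(nr/p)$ queries consumed by \texttt{Refine} plus the $O(np \log r)$ queries consumed by subsequent calls to \texttt{RefinePath}. Since we only need the output of \texttt{Refine}, we can quote exactly the $O(nr/p)$ bound on independence oracle queries.

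Next I would verify the three properties for the partial augmenting set $\Phi = (B_1, A_1, B_2)$ returned by \texttt{Refine}. Property (i) is immediate from Lemma~\ref{lem:extract_augmenting_set}: every partial augmenting set can be refined into an augmenting set $\Pi = (B'_1, A'_1, B'_2)$ with $B'_1 \subseteq B_1$, $A'_1 \subseteq A_1$, and $B'_2 = B_2$, so there is nothing new to prove here. Property (ii) is, up to notation, the termination criterion of \texttt{Refine}: the subroutine is designed to halt precisely when the surplus $|B_1| - |B_2|$ has dropped to at most $p$, so this property is built into the specification of the algorithm.

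The main obstacle, and the place where one has to be careful when citing \cite[Lemma 35]{blikstad2021breaking}, is property (iii): I need to argue that the exchange graph $G(S)$ still admits a maximal augmenting set whose width is at most $|B_1|$, where $|B_1|$ is the size produced by \texttt{Refine}. My plan is to track the invariant maintained throughout \texttt{Refine}: at every iteration the current $B_1 \subseteq D_1$ is an independent extension of $S$ in $\M_1$ that is closed under the refinement steps, and any width-$w$ augmenting set in $G(S)$ must use a $B_1$-part that is contained in (or matchable into) the current $B_1$; hence $w \le |B_1|$ holds for every maximal augmenting set. This monotonicity/domination statement is precisely the content of the inner invariants of Blikstad's \texttt{Refine}, as described in the lead-up to \cite[Lemma 35]{blikstad2021breaking}, so my proof would amount to pointing out that this invariant is preserved and is already present in Blikstad's analysis.

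Putting these pieces together, the proof is essentially a transcription: run \texttt{Refine} with parameter $p$, read off the query bound and property (ii) from its termination condition, read off property (i) from Lemma~\ref{lem:extract_augmenting_set}, and read off property (iii) from the invariant maintained during \texttt{Refine}. I would not attempt to reprove \texttt{Refine}'s inner workings; instead the proof would explicitly cite \cite[Algorithm 4 and Lemma 35]{blikstad2021breaking} and only re-derive the parts of the statement that differ in form (rather than substance) from Blikstad's original formulation, which is stated in terms of producing a maximal augmenting set rather than an intermediate partial one.
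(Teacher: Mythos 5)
Your handling of the query bound and of properties (i) and (ii) matches the paper: run only the \texttt{Refine} phase of Blikstad's algorithm until $|B_1|-|B_2|\leq p$ (each call costs $O(n)$ queries and $O(|S|/p+1)$ calls suffice, giving $O(nr/p)$), read (ii) off the stopping condition, and get (i) from the fact that $\Phi$ is a partial augmenting set (equivalently, Lemma~\ref{lem:extract_augmenting_set}). The one point the paper itself singles out as non-obvious is property (iii), and that is exactly where your argument has a genuine gap.

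You justify (iii) by asserting an invariant of \texttt{Refine} of the form ``any width-$w$ augmenting set in $G(S)$ must have its $B_1$-part contained in (or matchable into) the current $B_1$, hence every maximal augmenting set has width at most $|B_1|$,'' and you claim this is precisely Blikstad's inner invariant. It is not. The invariants maintained in \cite[Section 3]{blikstad2021breaking} are \emph{existential}: they say that \emph{some} maximal augmenting set is compatible with the current selected/fresh/removed classification (selected elements are in it, removed elements are not, the rest can be filled with fresh elements). In particular, these invariants naturally bound the width of that witness from \emph{below} by $|B_1|$, not from above, and they say nothing about \emph{all} augmenting sets: an element of $D_1$ is moved to $R_1$ because of an $\M_2$-exchange test relative to the currently available $D_2$-elements, yet it may still add $\M_1$-rank on top of $S+B_1$, so ``every augmenting set's $B_1$-part is matchable into the current $B_1$'' is neither stated in \cite{blikstad2021breaking} nor proved by you. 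The paper closes this gap by a different mechanism: if one hypothetically continued Blikstad's full algorithm, the subsequent \texttt{RefinePath} calls would transform the current partial augmenting set into a genuinely maximal augmenting set $\tilde{\Pi}$ of $G(S)$, and \texttt{RefinePath} never increases $|B_1|$ (this is in Blikstad's proof of his Lemma~35); hence that $\tilde{\Pi}$ has width at most the present $|B_1|$, which is exactly (iii). Your proposal never invokes this non-increasing property of \texttt{RefinePath}, and without it (or a correct substitute) property (iii) is unproven. A minor additional inaccuracy: \texttt{Refine} is a single refinement pass, not a subroutine that halts when $|B_1|-|B_2|\leq p$; the stopping rule belongs to the outer loop that calls it repeatedly, which is what yields the $O(nr/p)$ bound.
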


To obtain Lemma \ref{lem:blikstad_key_lemma}, we apply \texttt{Refine} \cite[Algorithm 4]{blikstad2021breaking} (see also Algorithm \ref{alg:blikstadrefine}) until $|B_1| - |B_2| \leq p$, but at least once.
The procedure \texttt{Refine} makes the partial augmenting set $\Phi = (B_1, A_1, B_2)$ close to a maximal augmenting set.
To become $|B_1| - |B_2| \leq p$, we need to apply \texttt{\textup{Refine}} $O(|S|/p + 1)$ times; see \cite[Proof of Lemma 35]{blikstad2021breaking}.
Since each call of $\texttt{Refine}$ uses $O(n)$ independence oracle queries (see \cite[Lemma 29]{blikstad2021breaking}), the total number of independence oracle queries is $O(nr / p)$.

\begin{remark}
    Lemma \ref{lem:blikstad_key_lemma} is not explicitly stated in Blikstad's \cite{blikstad2021breaking} paper.
    By the property of a partial augmenting set and the argument in \cite[Proof of Lemma 35]{blikstad2021breaking}, it is clear that the conditions $(i)$ and $(ii)$ are satisfied.
    In particular, here we show that the condition $(iii)$ is also satisfied.
    In the algorithm in \cite[Algorithm 6 and Lemma 35]{blikstad2021breaking} for finding a maximal augmenting set, after applying \texttt{\textup{Refine}} until $|B_1| - |B_2| \leq p$, we apply \texttt{\textup{RefinePath}} \cite[Algorithm 5]{blikstad2021breaking} until the partial augmenting set becomes a maximal augmenting set.
    While \texttt{\textup{RefinePath}} modifies $B_1$, it does not increase $|B_1|$; see \cite[the second paragraph of the proof of Lemma 35]{blikstad2021breaking}.
    % This is because, after applying \texttt{\textup{Refine}}, $B_1$ became a maximal set in $D_1 \setminus R_1$ for which $S + B_1 \subseteq \I_1$ holds.
    Thus, the partial augmenting set after applying \texttt{\textup{RefinePath}} is a maximal augmenting set of width at most $|B_1|$.
    Here, let this maximal augmenting set be $\tilde{\Pi}$.
\end{remark}

\section{$(2/3-\varepsilon)$-Approximation Algorithm using Nearly-Linear Independence-Oracle Queries}

In this section, by providing a nearly-linear-independence-query $(2/3-\varepsilon)$-approximation algorithm for the matroid intersection problem, we prove Theorem \ref{thm:main_matroid_intersection_ind}, which we restate here.

\indalgo*

We use the following lemma to prove Theorem \ref{thm:main_matroid_intersection_ind}.

\begin{lemma} \label{lem:twothirdspath}
    Let $S \in \I_1 \cap \I_2$ be a common independent set such that shortest augmenting paths in $G(S)$ have length at least $6$.
    Then, $S$ is a $2/3$-approximate solution of the matroid intersection problem.
\end{lemma}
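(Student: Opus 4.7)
The plan is to derive this almost immediately from Cunningham's distance bound (Lemma~\ref{lem:total_bound_matroid_intersection}), which already packages the relationship between the shortest augmenting path length and the ratio $|S|/r$.

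First, I would split into two trivial cases depending on whether $S$ is a maximum common independent set. If $S$ is maximum, then $|S| = r \geq (2/3)r$ and we are done. Otherwise, $S$ is not maximum, and Lemma~\ref{lem:total_bound_matroid_intersection} applies: there exists an $(s,t)$-path in $G(S)$ of length at most $\frac{2|S|}{r-|S|} + 2$. In particular, the shortest $(s,t)$-path has length at most this quantity.

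Second, I would combine this with the hypothesis that every augmenting path has length at least $6$. This gives the inequality
\[
6 \;\leq\; \frac{2|S|}{r - |S|} + 2,
\]
and rearranging (using $r - |S| > 0$) yields $4(r - |S|) \leq 2|S|$, hence $|S| \geq \frac{2}{3} r$, as desired.

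There is essentially no obstacle here beyond being careful to dispatch the trivial case $|S|=r$ separately so that dividing by $r - |S|$ is legal. The entire content of the lemma is a two-line consequence of Cunningham's bound, so the proof is really just a bookkeeping exercise: state the case split, invoke Lemma~\ref{lem:total_bound_matroid_intersection}, and solve the resulting linear inequality.
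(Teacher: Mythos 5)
Your proof is correct and uses the same key ingredient as the paper, namely Cunningham's bound (Lemma~\ref{lem:total_bound_matroid_intersection}) plus a one-line rearrangement; the paper merely phrases it by contraposition (assuming $|S| < \tfrac{2}{3}r$ and deducing an augmenting path of length less than $6$), while you argue directly with a case split on whether $S$ is maximum. The two arguments are equivalent, and your handling of the $|S| = r$ case is a fine way to ensure the division by $r - |S|$ is legitimate.
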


\begin{proof}
    We argue by contraposition.
    Let $S \in \I_1 \cap \I_2$ be a common independent set with $|S| < \frac{2}{3}r$.
    Then, we have $\frac{2 |S|}{r - |S|} + 2 < 6$.
    Thus, by Lemma \ref{lem:total_bound_matroid_intersection}, there exists an augmenting path of length at most $\frac{2 |S|}{r - |S|} + 2 < 6$ in $G(S)$.
    This completes the proof.
\end{proof}

% \begin{lemma}
%     Given a maximal common independent set $S \in \I_1 \cap \I_2$ such that the length of the shortest augmenting path in $G(S)$ is $4$, using $O(\frac{n}{\varepsilon} + r \log r)$ independence queries, we can find an augmenting set $\Pi$ such that one of the following two conditions is satisfied:
%     \begin{itemize}
%         \item The length of the shortest augmenting path in $G(S \oplus \Pi)$ is $4$, and the maximum width of an augmenting set in $G(S \oplus \Pi)$ is $O(\varepsilon r)$.
%         \item The length of the shortest augmenting path in $G(S \oplus \Pi)$ is at least $6$. 
%     \end{itemize}
% \end{lemma}

% \begin{proof}
    
% \end{proof}

Now, we give a proof of Theorem \ref{thm:main_matroid_intersection_ind}.
See Algorithm \ref{alg:total_algo} for the pseudocode of our algorithm.

\begin{proof}[Proof of Theorem \ref{thm:main_matroid_intersection_ind}]
    We give an algorithm to find a $(2/3 - \eps)$-approximate solution for the matroid intersection problem.
    In our algorithm, we first compute a maximal common independent set $S \in \I_1 \cap \I_2$, which uses $O(n)$ independence oracle queries.
    Here, let $\bar{r}$ be the size of this maximal common independent set $S$.
    We note that $r \leq 2 \bar{r}$ since any maximal common independent set is at least half the size of a maximum common independent set (see e.g., \cite[Proposition 13.26]{korte2011combinatorial}).

    Then, we compute the distance layers $D_1$, $D_2$, and $D_3$ in the exchange graph $G(S)$, which, by Lemma \ref{lem:find_dist_layers_ind}, uses $O(n + r \log r)$ independence oracle queries.
    Here, there is no augmenting path of length $2$ in $G(S)$, since $S$ is a maximal common independent set.
    Thus, shortest augmenting paths in $G(S)$ have length at least $4$.
    
    Next, by checking whether $S + v \in \I_2$ for each $v \in D_3$,  we check whether the $(s, t)$-distance is $4$ or not, which uses $O(n)$ independence oracle queries.
    If the $(s, t)$-distance is more than $4$, then $S$ is already a $2/3$-approximate solution by Lemma \ref{lem:twothirdspath}, so we output $S$.
    Otherwise, we apply Lemma \ref{lem:blikstad_key_lemma} in which we set the parameter $p = \eps \bar{r}$, which uses $O(n / \eps)$ independence oracle queries.\footnote{We apply \texttt{\textup{Refine}} (Algorithm \ref{alg:blikstadrefine}) only $O(1/\eps)$ times.}
    Then, we obtain a partial augmenting set $\Phi = (B_1, A_1, B_2)$.
    By applying Lemma \ref{lem:extract_augmenting_set} for the obtained partial augmenting set $\Phi$, using $O(n)$ independence oracle queries, we can find an augmenting set $\Pi = (B'_1, A'_1, B'_2)$ such that $B'_1 \subseteq B_1$, $A'_1 \subseteq A_1$ and $B'_2 = B_2$.
    Then, we output the set $S \oplus \Pi$.

    By Theorem \ref{afterupdateaugmentingset}, the set $S \oplus \Pi$ is a common independent set.
    % Moreover, The total independence query complexity of our algorithm is $O(n/\eps + r \log r)$.
    Now, we show that the obtained solution $S \oplus \Pi$ is a $(2/3 - \eps)$-approximate solution.

    By the condition $(iii)$ in Lemma \ref{lem:blikstad_key_lemma}, there is a maximal augmenting set $\tilde{\Pi} = (\tilde{B_1}, \tilde{A_1}, \tilde{B_2})$ in $G(S)$ such that $|\tilde{B_1}| \leq |B_1|$.
    By Theorems \ref{afterupdateaugmentingset} and \ref{thm:no_path} and Lemma \ref{lem:twothirdspath}, $S \oplus \tilde{\Pi}$ is a $2/3$-approximate solution, so we have $|S \oplus \tilde{\Pi}| \geq \frac{2}{3} r$.
    Moreover, by the condition $(ii)$ in Lemma \ref{lem:blikstad_key_lemma}, we have $|\tilde{B_1}| \leq |B_1| \leq |B_2| + p = |B'_2| + p = |B'_1| + p$. 
    Thus, we have $|S \oplus \tilde{\Pi}| - |S \oplus \Pi| = |\tilde{B_1}| - |B'_1| \leq p = \eps \bar{r} \leq \eps r$.
    Therefore, we have $|S \oplus \Pi| \geq (2/3 - \eps)r$, which completes the proof.
\end{proof}

\begin{algorithm}[t]
    Compute a maximal common independent set $S \in \I_1 \cap \I_2$.  \tcp*[f]{Write $\bar{r} = |S|$.} \\
    Compute the distance layers $D_1, D_2$, and $D_3$ in the exchange graph $G(S)$. \label{line:our_algo_compute_distance} \\
    \If{the $(s, t)$-distance in $G(S)$ is more than $4$} {
        \Return $S$
    }
    Apply \texttt{Refine} (Algorithm \ref{alg:blikstadrefine}) until $|B_1| - |B_2| \leq \eps \bar{r}$. \\
    Apply Lemma \ref{lem:extract_augmenting_set} for the obtained partial augmenting set $\Phi = (B_1, A_1, B_2)$ and then obtain an augmenting set $\Pi = (B'_1, A'_1, B'_2)$ such that $B'_1 \subseteq B_1$, $A'_1 \subseteq A_1$ and $B'_2 = B_2$. \\
    \Return $S \oplus \Pi$
    \caption{$(2/3-\eps)$-approximation algorithm for the matroid intersection problem}\label{alg:total_algo} 
\end{algorithm}

\section{$(2/3-\varepsilon)$-Approximation Algorithm using Linear Rank-Oracle Queries}

In this section, by providing a linear-rank-query $(2/3-\varepsilon)$-approximation algorithm for the matroid intersection problem, we prove Theorem \ref{thm:main_matroid_intersection_rank}.
% We show that a $(2/3 - \eps)$-approximation solution can be obtained with only $O(n/\eps)$ rank oracle queries.
We show that Algorithm \ref{alg:total_algo} can be implemented as a $(2/3 - \eps)$-approximation algorithm with linear rank oracle queries.
We restate Theorem \ref{thm:main_matroid_intersection_rank} here.

\rankalgo*

Chakrabarty--Lee--Sidford--Singla--Wong \cite{chakrabarty2019faster} showed that a $(1 - \eps)$-approximate solution can be obtained with $O(\frac{n}{\eps} \log n)$ rank oracle queries by using the binary search technique; see \cite[Theorem 17]{chakrabarty2019faster}.
In their $(1 - \eps)$-approximation algorithm, they find some edges using the binary search technique.
For each edge, this requires $O(\log n)$ rank oracle queries. 
% We can implement Algorithm \ref{alg:total_algo} with $O(n / \eps)$ rank oracle queries.
On the other hand, we show that a $(2/3 - \eps)$-approximate solution can be obtained with only $O(n/\eps)$ rank oracle queries.
In our $(2/3 - \eps)$-approximation algorithm, we find an almost maximal augmenting set without needing to identify any specific edges.
Hence, the rank oracle query complexity of our algorithm is linear without any logarithmic factor.

% \begin{lemma}[Binary Search Technique from {\cite[Lemma 10]{chakrabarty2019faster}}]\label{lem:binary_search_rank}
%     Given a matroid $\M = (V, \mathcal{I})$, an independent set $S \in \mathcal{I}$, an element $u \in SS$, and $T \subseteq V \setminus S$, using $O(\log |T|)$ rank oracle queries, we can find an element $v \in T$ such that $S + v - u \in \mathcal{I}$ or otherwise determine that no such element exists.
%     Furthermore, if there exists no such an element, then this procedure uses only one rank oracle query.\footnote{This is because there exists such an element if and only if $\textrm{rank}_{\M}(S + B - u) \geq \textrm{rank}_{\M}(S)$.}
% \end{lemma}

We use the following lemma to prove Theorem \ref{thm:main_matroid_intersection_rank}.

\begin{lemma} \label{lem:find_dist_layers_rank}
    Given a common independent set $S \in \I_1 \cap \I_2$, using $O(n)$ rank oracle queries, we can find the distance layers $D_1 \subseteq V \setminus S, D_2 \subseteq S$, and $D_3 \subseteq V \setminus S$.
\end{lemma}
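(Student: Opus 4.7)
The plan is to exploit the fact that a rank oracle can test the set-level question ``is the entire set $X$ contained in $\mathrm{span}_{\M_i}(Y)$?'' in a single query (by comparing $\mathrm{rank}_{\M_i}(X \cup Y)$ to $\mathrm{rank}_{\M_i}(Y)$), thereby avoiding the $O(\log r)$-factor binary search that drives the $O(n + r \log r)$ bound in Lemma~\ref{lem:find_dist_layers_ind}. Concretely, I will characterize membership in each of $D_1$, $D_2$, $D_3$ by a single rank comparison per candidate element, and then sum up the costs.

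First, for $D_1$: check, for each $v \in V \setminus S$, whether $\mathrm{rank}_{\M_1}(S + v) = |S| + 1$; this uses $O(n)$ queries to the rank oracle of $\M_1$. Next, for $D_2 \subseteq S$: by definition of the exchange graph, $u \in D_2$ iff some $v \in D_1$ satisfies $S - u + v \in \I_2$, and since $S - u \in \I_2$, this is equivalent to $D_1 \not\subseteq \mathrm{span}_{\M_2}(S - u)$. I would test this by computing $\mathrm{rank}_{\M_2}((S - u) \cup D_1)$ and comparing to $|S| - 1$, which is one rank query per $u \in S$, for $O(r)$ queries in total. This is the heart of the speedup over the independence-oracle implementation in Algorithm~\ref{alg:GetDistance}, which locates individual swap partners by binary search rather than testing the set-level condition all at once.

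Finally, for $D_3$, I would reuse the characterization already appearing in the last loop of Algorithm~\ref{alg:GetDistance}: $v \in D_3$ iff $v \in V \setminus (S \cup D_1)$ and $(S - D_2) + v \in \I_1$. The forward direction is immediate from the definition of $E'$; for the converse, one uses that $S + v$ contains a unique $\M_1$-circuit $C$ (the fundamental circuit of $v$ over $S$), so the condition $C \cap D_2 \ne \emptyset$---which is precisely the condition that some $u \in D_2$ can serve as a swap partner---holds iff deleting $D_2$ destroys $C$, iff $(S - D_2) + v \in \I_1$. Each such check is one rank query to $\M_1$, giving $O(n)$ more queries; summing the three contributions yields the claimed $O(n)$ rank-oracle queries.

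The main obstacle, in my view, is the $D_2$ step: one has to resist the temptation to iterate over pairs $(v, u) \in D_1 \times S$ as in the independence-oracle algorithm, and instead observe that a single set-level rank test settles whether \emph{any} $v \in D_1$ exchanges with a given $u$. Once that observation is in hand, the $D_1$ and $D_3$ steps are routine adaptations of the independence-oracle algorithm in which each binary search has been collapsed into a single rank comparison.
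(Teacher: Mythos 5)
Your proposal is correct and follows essentially the same route as the paper: implement \texttt{GetDistance} with single rank queries replacing independence tests, and compute $D_2$ by the per-element set-level test $\mathrm{rank}_{\M_2}((S-u)\cup D_1) \geq |S|$, which is exactly the paper's comparison $\mathrm{rank}_{\M_2}(S + D_1 - u) \geq \mathrm{rank}_{\M_2}(S)$. Your added justification of the $D_3$ characterization via the fundamental circuit is a correct elaboration of what the paper inherits from Algorithm~\ref{alg:GetDistance}.
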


\begin{proof}
    % Given a matroid $\M = (V, \mathcal{I})$, an independent set $S \in \mathcal{I}$, an element $u \in S$, and $T \subseteq V \setminus S$, using only one rank oracle query, we can determine whether or not there exists an element $v \in T$ such that $S + v - u \in \mathcal{I}$.
    % This is because there exists such an element if and only if $\textrm{rank}_{\M}(S + T - u) \geq \textrm{rank}_{\M}(S)$.
    % In the same way as Lemma \ref{lem:find_dist_layers_ind}, we can compute $D_1$
    Our algorithm for computing $D_1, D_2$, and $D_3$ is almost the same as Algorithm \ref{alg:GetDistance}.
    Since a single rank oracle query can determine whether a given subset is independent or not, Algorithm \ref{alg:GetDistance} except for computing $D_2$ can be implemented with $O(n)$ rank oracle queries.
    Now, we show how $D_2$ can be computed with $O(n)$ rank oracle queries.
    For each $u \in S$, we check whether $\textrm{rank}_{\M_2}(S + D_1 - u) \geq \textrm{rank}_{\M_2}(S)$ holds or not.
    If $\textrm{rank}_{\M_2}(S + D_1 - u) \geq \textrm{rank}_{\M_2}(S)$ holds, then there exists an element $v \in D_1$ such that $S + v - u \in \I_2$, and consequently $u \in D_2$.
    Otherwise, there does not exist an element $v \in D_1$ such that $S + v - u \in \I_2$, and consequently $u \notin D_2$.
    This completes the proof.
\end{proof}

\begin{proof}[Proof of Theorem \ref{thm:main_matroid_intersection_rank}]
    Now, we show how Algorithm \ref{alg:total_algo} can be implemented with $O(n / \eps)$ rank oracle queries.
    As mentioned in the proof of Theorem \ref{thm:main_matroid_intersection_ind}, all parts of Algorithm \ref{alg:total_algo}, except for Line \ref{line:our_algo_compute_distance}, can be implemented with $O(n / \eps)$ independence oracle queries.
    Since a single rank oracle query can determine whether a given subset is independent or not, we can implement all parts of Algorithm \ref{alg:total_algo}, except for Line \ref{line:our_algo_compute_distance}, with $O(n / \eps)$ rank oracle queries.
    In addition, by Lemma \ref{lem:find_dist_layers_rank}, we can implement Line \ref{line:our_algo_compute_distance} with $O(n)$ rank oracle queries, which completes the proof.
    % we can compute the distance layers $D_1, D_2$, and $D_3$ using $O(n)$ rank oracle queries, which completes the proof.
\end{proof}

\section{$(2/3-\varepsilon)$-Approximation Algorithm in the Semi-Streaming Model}

In this section, by providing a $(2/3-\varepsilon)$-approximation algorithm for the matroid intersection problem in the semi-streaming model, we prove Theorem \ref{thm:main_matroid_intersection_streaming}.
We show that Algorithm \ref{alg:total_algo} can also be implemented in the semi-streaming model as a $(2/3 - \eps)$-approximation algorithm using a constant number of passes.
We restate Theorem \ref{thm:main_matroid_intersection_streaming} here.

\streamingalgo*

% In Section \ref{subsec:blikstad_previous_algo}, we describe Blikstad's \cite{blikstad2021breaking} algorithm in Lemma \ref{lem:blikstad_key_lemma}, and in Section \ref{subsec:how_to_implement_in_semi_streaming}, we present how Algorithm \ref{alg:total_algo} can be implemented in the semi-streaming model.

% \subsection{} \label{subsec:blikstad_previous_algo}

Before we present how Algorithm \ref{alg:total_algo} can be implemented in the semi-streaming model, we describe the outline of Blikstad's \cite{blikstad2021breaking} algorithm of Lemma \ref{lem:blikstad_key_lemma}.
This is because some of his results will be used in our argument later.
In this paper, we skip the proof of the correctness of the algorithm; see \cite[Sections 3.1 and 3.3]{blikstad2021breaking} for the proof.
Recall that, in Lemma \ref{lem:blikstad_key_lemma}, we apply \texttt{Refine} only $O(1/\eps)$ times.
See Algorithm \ref{alg:blikstadrefine} for the pseudocode of \texttt{Refine}.

In Blikstad's algorithm, we maintain three types of elements in each layer (see \cite[Section 3.1]{blikstad2021breaking}): 
% {\em fresh}, {\em selected}, and {\em removed}; see \cite[Section 3.1]{blikstad2021breaking}.
% The set of fresh elements in $D_k$ denotes $F_k$.
% The sets of selected elements in $D_{2k - 1}$ and $D_{2k}$ denote $B_k$ and $A_k$, respectively.
% Here, the sets of selected elements form the partial augmenting set $\Pi = (B_1, A_1, B_2)$.
% The set of removed elements in $D_k$ denotes $R_k$.
\begin{itemize}
    \item {\em fresh.} Denoted by $F_i \subseteq D_i$. These elements are candidates that could be added to the partial augmenting set.
    \item {\em selected.} Denoted by $B_1, A_1, B_2$. These elements form the current partial augmenting set $\Pi = (B_1, A_1, B_2)$.
    \item {\em removed.} Denoted by $R_i \subseteq D_i$. These elements are deemed useless, and then we can disregard them.
\end{itemize}
For convenience, we also define imaginary layers $D_0$ and $D_4$ with $A_0 = R_0 = F_0 = D_0 = A_2 = R_4 = F_4 = D_4 = \emptyset$.
% In the algorithms of Chakrabarty et al.~\cite{chakrabarty2019faster} and Blikstad \cite{blikstad2021breaking}, they keep track of the partial augmenting set $\Phi = (B_1, A_1, B_2)$ which is iteratively made better through refine procedures.

In \texttt{Refine} (Algorithm \ref{alg:blikstadrefine}), we iteratively apply \texttt{RefineAB} (Algorithm \ref{alg:blikstadphaseab}) and \texttt{RefineBA} (Algorithm \ref{alg:blikstadphaseba}), and then update the types of elements.
Initially, we begin with all elements being {\em fresh}. 
Elements can change their types from {\em fresh} $\rightarrow$ {\em selected} $\rightarrow$ {\em removed}, but their types cannot be changed in the other direction.
Note that an element of type {\em fresh} can change to type {\em removed} without first going through type {\em selected}.
    
\begin{algorithm}[t]
    Find maximal $B \subseteq F_{2k+1}$ s.t. $S - A_k + B_{k + 1} + B \in \I_1$ \\
    $B_{k + 1} \gets B_{k + 1} + B$, $F_{2k+1} \gets F_{2k+1} - B$ \\
    Find maximal $A \subseteq A_k$ s.t. $S - A_k + B_{k + 1} + A \in \I_1$ \\
    $A_k \gets A_k - A$, $R_{2k} \gets R_{2k} + A$ \\
    \caption{\texttt{RefineAB}$(k)$ (from {\cite[Algorithm 1]{blikstad2021breaking}}); called \texttt{Refine1} in {\cite[Algorithm 9]{chakrabarty2019faster}}}\label{alg:blikstadphaseab}
\end{algorithm}

\begin{algorithm}[t]
    Find maximal $B \subseteq B_k$ s.t. $S - (D_{2k} - R_{2k}) + B \in \I_2$ \\
    $R_{2k-1} \gets R_{2k-1} + B_k \setminus B$, $B_{k} \gets B$ \\
    Find maximal $A \subseteq F_{2k}$ s.t. $S - (D_{2k} - R_{2k}) + B_{k} + A \in \I_2$ \\
    $A_k \gets A_k + F_{2k} \setminus A$, $F_{2k} \gets A$ \\
    \caption{\texttt{RefineBA}$(k)$ (from {\cite[Algorithm 2]{blikstad2021breaking}}); called \texttt{Refine2} in {\cite[Algorithm 10]{chakrabarty2019faster}}}\label{alg:blikstadphaseba}
\end{algorithm}

\begin{algorithm}[t]
    \For{$k = 1, 0$} {
        \texttt{RefineBA}$(k+1)$ \\
        \For{$x \in F_{2k+1}$} { \label{line:refine1}
            \If{$S - A_k + B_{k + 1} + x \in \I_1$} {
                \If{$S - A_{k + 1} - F_{2k+2} + B_{k + 1} + x \in \I_2$} {
                    $B_{k + 1} \gets B_{k + 1} + x$, $F_{2k+1} \gets F_{2k+1} - x$
                } \Else {
                    $R_{2k+1} \gets R_{2k+1} + x$, $F_{2k+1} \gets F_{2k+1} - x$
                }
            }
        } \label{line:refine2}
        \texttt{RefineBA}$(k+1)$ \\
        \texttt{RefineAB}$(k)$ \\
    }
    \caption{\texttt{Refine} (from {\cite[Algorithms 3 and 4]{blikstad2021breaking}})}\label{alg:blikstadrefine}
\end{algorithm}

% \subsection{} \label{subsec:how_to_implement_in_semi_streaming}

Now, by providing how Algorithm \ref{alg:total_algo} can be implemented in the semi-streaming model, we give a proof of Theorem \ref{thm:main_matroid_intersection_streaming}.

\begin{proof}[Proof of Theorem \ref{thm:main_matroid_intersection_streaming}]
    We show how Algorithm \ref{alg:total_algo} can be implemented using $O(\frac{r_1 + r_2}{\eps})$ memory and $O(\frac{1}{\eps})$ passes in the semi-streaming model.

    In our algorithm, We first compute a maximal common independent set $S \in \I_1 \cap \I_2$ using one pass of the stream.
    Note that we can easily compute a maximal set in a single pass.
    We store the set $S$ explicitly using $O(r)$ memory.

    Then, we compute $D_2$ in the exchange graph $G(S)$ using one pass of the stream.
    For each $v \in V \setminus S$, if $S + v \in \I_1$, then we find all elements $u \in S$ such that $S + v - u \in \I_2$ and add them to $D_2$.

    We store $D_2$ explicitly using $O(r)$ memory. In our algorithm, we store the types of elements (i.e., $F_2, A_1, R_2$) in $D_2$ explicitly using $O(r)$ memory.
    Whenever an element $v \in V \setminus S$ arrives in the stream, we can determine whether $v \in D_1$, $v \in D_3$, or otherwise, by checking whether $S + v \in \I_1$ and whether there exists an element $u \in D_2$ such that $S + v - u \in \I_1$.
    Thus, we can maintain the distance layers $D_1$ and $D_3$ implicitly.

    Next, we apply \texttt{Refine} (Algorithm \ref{alg:blikstadrefine}) $O(1/\eps)$ times.
    In our implementation of \texttt{Refine} in the semi-streaming model, we replace Lines \ref{line:refine1}--\ref{line:refine2} in \texttt{Refine} with \texttt{UpdateABA}$(k)$ (Algorithm \ref{alg:blikstadrefine_streaming}).
    By Claim \ref{claim:how_to_implement_in_streaming_setting}, our implementation of \texttt{Refine} can also correctly find a desired partial augmenting set.

    \begin{claim} \label{claim:how_to_implement_in_streaming_setting}
        Even if we replace Lines \ref{line:refine1}--\ref{line:refine2} in \texttt{Refine} with \texttt{UpdateABA}$(k)$ (Algorithm \ref{alg:blikstadrefine_streaming}), the procedure \texttt{\textup{Refine}} can correctly find a partial augmenting set that satisfies the conditions $(i)$--$(iii)$ in Lemma \ref{lem:blikstad_key_lemma}.
    \end{claim}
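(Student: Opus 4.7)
The plan is to argue that replacing Lines~\ref{line:refine1}--\ref{line:refine2} of \texttt{Refine} with \texttt{UpdateABA}$(k)$ only changes the order in which the elements of $F_{2k+1}$ are visited, and then to invoke the fact that Blikstad's analysis does not depend on this order. Since \texttt{Refine} otherwise only calls \texttt{RefineAB} and \texttt{RefineBA}, which act on sets contained in $S \cup D_2$ that we keep in memory throughout, the loop over $F_{2k+1}$ is the only step that needs stream access.

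First, I would verify that each invocation of \texttt{UpdateABA}$(k)$ can execute in one pass of the stream and faithfully simulate one iteration of the original loop body. Both independence tests, $S - A_k + B_{k+1} + x \in \I_1$ and $S - A_{k+1} - F_{2k+2} + B_{k+1} + x \in \I_2$, involve sets supported in $(S \cup D_2) \cup B_{k+1} \cup \{x\}$: the sets $A_k, A_{k+1}, F_{2k+2}$ all lie in $S \cup D_2$ (or are empty by the imaginary-layer convention), and $B_{k+1} \subseteq D_1 \cup D_3$ has size at most $r$. Since the algorithm has already stored $S$, the distance layer $D_2$ together with its type decomposition, and the selected sets $B_1, B_2$, it can carry out both tests and the associated update as soon as $x$ arrives from the stream, using only $O(r_1 + r_2)$ memory per pass.

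Second, I would show that the outcome of the loop does not depend on the order in which elements of $F_{2k+1}$ are scanned, at least at the level of Blikstad's invariants. The key monotonicity is that if $x$ fails the $\I_1$-test $S - A_k + B_{k+1} + x \in \I_1$ at some point during the loop, it can never pass it later, since $B_{k+1}$ only grows and dependence is preserved under taking supersets. Hence, after the loop, the fresh set $F_{2k+1}$ consists exactly of those elements that fail the $\I_1$-test against the final $B_{k+1}$, and by the matroid augmentation property any maximal greedy extension of the initial $B_{k+1}$ within $F_{2k+1}$ subject to $\M_1$-independence has the same size regardless of order. The further per-element partition of the visited elements into $B_{k+1}$ versus $R_{2k+1}$ via the $\I_2$-test is made locally, and any order-dependent discrepancy is absorbed by the following calls to \texttt{RefineBA} and (in the next outer iteration) \texttt{RefineAB}, which reorganize the types.

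The main obstacle will be to carefully reread Blikstad's analysis in \cite[Sections 3.1 and 3.3]{blikstad2021breaking} to confirm that no step implicitly relies on a particular iteration order in Lines~\ref{line:refine1}--\ref{line:refine2}. Because the original pseudocode writes ``\texttt{for} $x \in F_{2k+1}$'' without specifying an order, this is strongly suggested, but each invariant stated after the loop must be re-examined under the weaker assumption of an arbitrary order. Once this is verified, the terminal state produced by the modified \texttt{Refine} satisfies the same invariants as the original, so conditions $(i)$--$(iii)$ of Lemma~\ref{lem:blikstad_key_lemma} follow. Combining this with one stream pass per call to \texttt{UpdateABA}$(k)$ and the $O(1/\eps)$ total calls to \texttt{Refine} yields the $O(1/\eps)$-pass bound claimed in Theorem~\ref{thm:main_matroid_intersection_streaming}.
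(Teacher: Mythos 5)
Your high-level plan coincides with the paper's: show that the two-pass \texttt{UpdateABA}$(k)$ is just the original Lines~\ref{line:refine1}--\ref{line:refine2} executed under a particular visiting order of $F_{2k+1}$, and then use the fact that the correctness of \texttt{Refine} does not depend on that order. But you never actually carry out the first step, and that is where the real content lies. \texttt{UpdateABA}$(k)$ is \emph{not} merely a reordering on its face: in its second pass it moves every element passing the $\I_1$-test directly to $R_{2k+1}$ \emph{without rechecking the $\I_2$-test}. To justify this you must fix the order ``elements accepted into $B_{k+1}$ during the first pass come first (in their acceptance order), then all remaining elements'' and verify that, in the original loop run in this order, each remaining element that passes the $\I_1$-test necessarily fails the $\I_2$-test. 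That verification needs the monotonicity of \emph{both} tests under growth of $B_{k+1}$ (a superset of a dependent set is dependent, while $A_k$, $A_{k+1}$, $F_{2k+2}$ are untouched by the loop): an element skipped in the first pass either failed the $\I_1$-test (then it still fails, and stays fresh in both versions) or failed the $\I_2$-test (then it still fails, so the original loop would also classify it as removed whenever it passes the $\I_1$-test). You prove the $\I_1$-monotonicity but never address the $\I_2$ side, which is precisely the step that licenses dropping the $\I_2$-test in the second pass. This is how the paper argues: the type changes produced by \texttt{UpdateABA}$(k)$ are \emph{identical} to those of the original loop under the chosen order, and the original \texttt{Refine} yields a partial augmenting set satisfying (i)--(iii) of Lemma~\ref{lem:blikstad_key_lemma} for every order.

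Your substitute for this step is problematic in two ways. First, the assertion that ``the outcome of the loop does not depend on the order'' is stronger than what is true and stronger than what is needed: different orders can produce genuinely different sets $B_{k+1}$, $R_{2k+1}$, $F_{2k+1}$; what holds is only that every order yields an outcome satisfying the invariants (likewise, your characterization of the final fresh set as exactly the elements failing the $\I_1$-test against the final $B_{k+1}$ is off, since elements moved to $R_{2k+1}$ early may also fail it). Second, the claim that ``any order-dependent discrepancy is absorbed by the following calls to \texttt{RefineBA} and \texttt{RefineAB}'' is unsupported and, once the simulation argument is done properly, unnecessary. As written, the proposal leaves the crucial equivalence between \texttt{UpdateABA}$(k)$ and the original loop as an acknowledged ``obstacle,'' so the proof of the claim is incomplete.
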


    \begin{proof}
        Consider the order of the elements in $F_{2k+1}$ satisfying the following condition: an element added to $B_{k + 1}$ in the first for loop of \texttt{UpdateABA}$(k)$ appears before any element that is not added to $B_{k + 1}$ in that loop.
        Suppose that, in the execution of Lines {\ref{line:refine1}--\ref{line:refine2}} in \texttt{Refine}, the elements in $F_{2k+1}$ arrive in this order.
        % the order of the elements in the for loop follows the order we considered above.
        Then, the changes of types of elements in $D_{2k+1}$ by \texttt{UpdateABA}$(k)$ call are exactly same as by the execution of Lines {\ref{line:refine1}--\ref{line:refine2}} in \texttt{Refine} (Algorithm \ref{alg:blikstadrefine}).
        Regardless of the order of the elements in the for loop in Line \ref{line:refine1}, the implementation of \texttt{\textup{Refine}} in Algorithm \ref{alg:blikstadrefine} can find a desired partial augmenting set.  
        Therefore, \texttt{Refine} with Lines \ref{line:refine1}--\ref{line:refine2} replaced by \texttt{UpdateABA}$(k)$ correctly finds a desired partial augmenting set, which completes the proof.
    \end{proof}

    \begin{algorithm}[t]
        \For(\tcp*[f]{first pass}){$x \in F_{2k+1}$} { 
            \If{$S - A_k + B_{k + 1} + x \in \I_1$ and $S - A_{k + 1} - F_{2k+2} + B_{k + 1} + x \in \I_2$} {
                $B_{k + 1} \gets B_{k + 1} + x$, $F_{2k+1} \gets F_{2k+1} - x$
            }
        } 
        Store the current $A_k$ and $B_{k + 1}$ as $A_k^{(i)}$ and $B_{k + 1}^{(i)}$, respectively. \\
        \tcp{Assume that this procedure is the $i$-th call of \texttt{UpdateABA}$(k)$ in the entire matroid intersection algorithm.}
        \For(\tcp*[f]{second pass}){$x \in F_{2k+1}$} {
            \If{$S - A_k + B_{k + 1} + x \in \I_1$} {
                $R_{2k+1} \gets R_{2k+1} + x$, $F_{2k+1} \gets F_{2k+1} - x$ \label{line:streaming1}
            }
        } 
        \caption{\texttt{UpdateABA}$(k)$ (implementation of Lines {\ref{line:refine1}--\ref{line:refine2}} in \texttt{Refine} (Algorithm \ref{alg:blikstadrefine}) in the semi-streaming model)} \label{alg:blikstadrefine_streaming}
    \end{algorithm}

    Now, we show that each call of the modified \texttt{Refine} can be implemented with $O(1)$ passes.
    Since we can compute a maximal set in a single pass, both \texttt{RefineAB} (Algorithm \ref{alg:blikstadphaseab}) and \texttt{RefineBA} (Algorithm \ref{alg:blikstadphaseba}) can be implemented with $2$ passes. 
    In addition, \texttt{UpdateABA} (Algorithm \ref{alg:blikstadrefine_streaming}) can also be implemented with $2$ passes.
    
    Next, we show that we can maintain the types of elements (i.e., $F_1, B_1, R_1, F_3, B_2, R_3$) in $D_1$ and $D_3$ implicitly. 
    To do this, we explicitly maintain the following:

    \begin{itemize}
        \item We store the current partial augmenting set $(B_1, A_1, B_2)$. Since the number of selected elements is always at most $r_1 + r_2$, we use $O(r_1 + r_2)$ memory to store it.
        
        \item We store all removed elements whose type has changed from {\em selected} to {\em removed}. 
        Only \texttt{RefineAB} and \texttt{RefineBA} change the types of elements from {\em selected} to {\em removed}. 
        In each call of \texttt{RefineAB} and \texttt{RefineBA}, this change occurs only for at most $r_1 + r_2$ elements.
        Since the number of \texttt{RefineAB} and \texttt{RefineBA} calls is $O(1/\eps)$, we use $O((r_1 + r_2)/\eps)$ memory to store them in the entire algorithm.

        \item We store $A_k^{(i)}$ and $B_{k + 1}^{(i)}$ just after the first for loop in \texttt{UpdateABA} (Algorithm \ref{alg:blikstadrefine_streaming}). 
        Since the number of \texttt{UpdateABA} calls is $O(1/\eps)$, we use $O((r_1 + r_2)/\eps)$ memory to store them in the entire algorithm.
    \end{itemize}

    Here, we note that, for an element whose current type is {\em removed}, there are only the following two cases.
    \begin{itemize}
        \item The type has changed from {\em selected} to {\em removed} by \texttt{RefineAB} or \texttt{RefineBA}.
        \item The type has changed from {\em fresh} to {\em removed} by \texttt{UpdateABA}.
    \end{itemize}
    
    Whenever an element $v \in D_1 \cup D_3$ arrives, we can identify the current type of $v$ in the following way:

    \begin{itemize}
        \item If the current type of $v$ is {\em selected}, then we can easily identify it.
        \item If the current type of $v$ is {\em removed}, then we identify it in the following way:
        \begin{itemize}
            \item In the case where the type of $v$ has changed from {\em selected} to {\em removed} by \texttt{RefineAB} or \texttt{RefineBA}, we can conclude that the current type of $v$ is {\em removed}.

            \item In the case where the type of $v$ has changed from {\em fresh} to {\em removed} by Line \ref{line:streaming1} in \texttt{UpdateABA} (Algorithm \ref{alg:blikstadrefine_streaming}), we identify it by simulating all previous executions of the second for loop in \texttt{UpdateABA}.
            More precisely, let $C$ be the number of \texttt{UpdateABA} called so far.
            % Here, we assume that $v \in D_{2 k + 1}$.
            If $v \in D_{2 k + 1}$ and there is an index $i \leq C$ such that there are $A_k^{(i)}$ and $B_{k + 1}^{(i)}$ such that $S - A_k^{(i)} + B_{k + 1}^{(i)} + v \in \I_1$, then we conclude that the current type of $v$ is {\em removed}.
        \end{itemize}

        \item Otherwise, the current type of $v$ is {\em fresh}.
    \end{itemize}
    % This completes the proof.
    Therefore, each call of the modified \texttt{Refine} can be implemented with $O(1)$ passes.

    After applying \texttt{Refine} $O(1/\eps)$ times, we obtain a partial augmenting set $\Phi = (B_1, A_1, B_2)$ such that $|B_1| - |B_2| \leq \eps \bar{r}$. 
    Then, by applying Lemma \ref{lem:extract_augmenting_set} for the obtained partial augmenting set $\Phi = (B_1, A_1, B_2)$, we obtain an augmenting set $\Pi = (B'_1, A'_1, B'_2)$ such that $B'_1 \subseteq B_1$, $A'_1 \subseteq A_1$ and $B_2 = B'_2$.
    The algorithm in Lemma \ref{lem:extract_augmenting_set} can be implemented without an additional pass of the stream, because it only accesses the set $S$ and the partial augmenting set $\Phi = (B_1, A_1, B_2)$; see \cite[Proof of Lemma 38]{chakrabarty2019faster}.
    
    Finally, we output the set $S \oplus \Pi$.
    By the same argument as Proof of Theorem \ref{thm:main_matroid_intersection_ind}, the set $S \oplus \Pi$ is a $(2/3-\eps)$-approximate solution, which completes the proof.
\end{proof}

\section{Concluding Remarks} \label{sec:concluding_remarks}

We have observed that a $(2/3 - \eps)$-approximate solution can be obtained by terminating Blikstad's \cite{blikstad2021breaking} algorithm early.
Then, we obtained a deterministic nearly-linear-independence-query $(2/3-\eps)$-approximation algorithm for the matroid intersection problem.

We were also able to use this observation in the streaming model of computation.
Then, we obtain a $(2/3 - \eps)$-approximation constant-pass semi-streaming algorithm for the matroid intersection problem.
This result is a generalization of the $(2/3 - \eps)$-approximation bipartite matching semi-streaming algorithm of Feigenbaum--Kannan--McGregor--Suri--Zhang \cite{feigenbaum2005graph}, who initiated the study of matching algorithms in the semi-streaming model.
Soon after, McGregor \cite{mcgregor2005finding} first obtained a $(1-\eps)$-approximation constant-pass semi-streaming algorithm for the maximum matching problem.
Since then, there have been many studies for $(1-\eps)$-approximation constant-pass semi-streaming algorithms for the maximum matching problem in the literature \cite{eggert2009bipartite,ahn2011laminar,eggert2012bipartite,kapralov2013better,tirodkar2018deterministic,gamlath2019weighted,assadi2021auction,fischer2022deterministic,huang20231}.
Then, it is natural to ask whether we can obtain a $(1-\eps)$-approximation constant-pass semi-streaming algorithm for the matroid intersection problem.

In the current Blikstad's algorithm for finding an augmenting set, it is necessary to perfectly find an augmenting set of length $2k$ before finding an augmenting set of length $2k + 2$.
By finding a maximal common independent set, we can perfectly find an augmenting set of length 2. This means that we can start from a state where shortest augmenting paths have length 4 or greater.
In our current algorithm, since we do not perfectly find an augmenting set of length 4, we are unable to find an augmenting set of length 6 or greater.
Thus, our current algorithm achieves only a $(2/3 - \eps)$-approximation rather than a $(1 - 1/k - \eps)$-approximation.
However, we believe our study could be an important step to obtain a deterministic nearly-linear-independence-query $(1-\eps)$-approximation algorithm.

\bibliography{bib2doi}

\end{document}